\newtheorem{theorem}{Theorem}
\newtheorem{lemma}[theorem]{Lemma}
\newtheorem{corollary}[theorem]{Corollary}
\newcommand{\BibTeX}{B\kern-.05em{\sc i\kern-.025em b}\kern-.08em\TeX}
\begin{document}

\newcommand*{\nats}{\mathbb{N}}
\newcommand*{\ints}{\mathbb{Z}}
\newcommand*{\rats}{\mathbb{Q}}
\newcommand*{\reals}{\mathbb{R}}
\newcommand*{\bs}[1]{\boldsymbol{#1}}
\newcommand*{\stack}{\mathit{stack}}
\newcommand*{\hmax}{\mathit{hmax}}
\newcommand*{\relu}{\mathit{relu}}
\newcommand*{\scalar}[1]{\langle #1\rangle}
\newcommand{\countt}[2]{|#2|_{#1}}

\newcommand{\len}{\mathit{len}}

\newcommand*{\fixedfracrats}{\mathbb{Q}_f}
\newcommand*{\fpaset}{\mathbb{Q}^b_f}
\newcommand*{\satu}{\mathit{satu}}
\newcommand*{\wrap}{\mathit{wrap}}
\newcommand*{\rup}{\mathit{rup}}
\newcommand*{\rdown}{\mathit{rdown}}

\newcommand*{\downsatu}{F^\downarrow_\satu}
\newcommand*{\downwrap}{F^\downarrow_\wrap}
\newcommand*{\upwrap}{F^\uparrow_\wrap}
\newcommand{\upsatu}{F^\uparrow_\satu}

\newcommand*{\pwlfnn}{\mathcal{N}({\small\textit{PWL}})}
\newcommand*{\gad}[1]{\langle #1\rangle}
\newcommand*{\para}{|\!|}

\newcommand*{\emb}{\mathit{emb}}
\newcommand*{\pos}{\mathit{pos}}
\newcommand*{\att}{\mathit{att}}
\newcommand*{\score}{\mathit{score}}
\newcommand*{\norm}{\mathit{norm}}
\newcommand*{\pool}{\mathit{pool}}
\newcommand*{\comb}{\mathit{comb}}
\newcommand*{\out}{\mathit{out}}

\newcommand*{\alphC}{\textfrak{S}}
\newcommand*{\transC}{\mathcal{T}}
\newcommand*{\posC}{\mathcal{P}}
\newcommand*{\attC}{\mathcal{A}}
\newcommand*{\combC}{\mathcal{C}}
\newcommand*{\outC}{\mathcal{O}}

\newcommand*{\first}{\transC_1}
\newcommand*{\periodTransC}{\transC_\circ}

\newcommand{\ssmSAT}{\textsc{ssmSAT}\xspace}
\newcommand*{\SSM}{\ensuremath{\mathfrak{S}}\xspace}
\newcommand*{\tiSSM}{\ensuremath{\mathfrak{S}_{\text{ti}}}\xspace}
\newcommand*{\diagSSM}{\ensuremath{\mathfrak{S}_{\text{diag}}}\xspace}
\newcommand*{\gate}{\mathit{gate}}
\newcommand*{\inc}{\mathit{inc}}

\newcommand*{\empt}{\textsc{Empty}}
\newcommand*{\pcp}{\textsc{Pcp}}
\newcommand*{\unbOTP}{\ensuremath{\textsc{OTP}^*}\xspace}
\newcommand*{\expbOTP}{\ensuremath{\textsc{OTP}^{\mathsf{exp}}}\xspace}
\newcommand*{\unbOTWP}{\ensuremath{\textsc{OTWP}^*}\xspace}
\newcommand*{\expbOTWP}{\ensuremath{\textsc{OTWP}^{\mathsf{exp}}}\xspace}
\newcommand*{\expbSQTP}{\ensuremath{\textsc{SQTP}^{\mathsf{exp}}}\xspace}

\newcommand*{\MHALT}{\ensuremath{\textsc{MHALT}}\xspace}
\newcommand*{\Act}{\text{Act}}
\newcommand*{\minc}{\mathit{inc}}
\newcommand*{\mdec}{\mathit{dec}}
\newcommand*{\mztest}{\mathit{ztest}}

\newcommand{\size}{\mathit{size}}
\newcommand{\poly}{\mathit{poly}}
\newcommand{\NEXPTIME}{\textsc{NEXPTIME}\xspace}
\newcommand{\NP}{\textsc{NP}\xspace}
\newcommand{\PSPACE}{\textsc{PSPACE}\xspace}
\newcommand{\EXPSPACE}{\textsc{EXPSPACE}\xspace}
\newcommand{\RE}{\textsc{RE}\xspace}

\newcommand*{\LTL}{\textsc{LTL}\xspace}
\newcommand*{\LTLf}{\textsc{LTL}_f\xspace}
\newcommand*{\pLTLf}{\textsc{pLTL}_f\xspace}
\newcommand*{\cLTLf}{\textsc{pLTL}_f[$\backhash$]\xspace}
\newcommand*{\mLTLf}{\textsc{pLTL}_f[\MOD]\xspace}
\newcommand*{\uLTLf}{\textsc{un-pLTL}_f\xspace}
\newcommand{\yesterday}{\ensuremath{\mathop{\mathtt{Y}}}}
\newcommand{\ttrue}{\ensuremath{\mathop{\mathtt{t\!t}}}}
\newcommand{\ffalse}{\ensuremath{\mathop{\mathtt{f\!f}}}}
\newcommand{\since}{\ensuremath{\mathbin{\mathtt{S}}}}
\newcommand{\history}{\ensuremath{\mathop{\mathtt{H}}}}
\newcommand{\previously}{\ensuremath{\mathop{\mathtt{P}}}}
\newcommand{\hash}{\ensuremath{\textsc{\#}\xspace}}
\newcommand{\backhash}{\ensuremath{\mathop{\overleftarrow{\mathtt{\#}}\xspace}}}
\newcommand{\ltlnext}{\ensuremath{\mathop{\mathtt{X}}}}
\newcommand{\ltlfinally}{\ensuremath{\mathop{\mathtt{F}}}}
\newcommand{\until}{\ensuremath{\mathbin{\mathtt{U}}}}
\newcommand*{\nd}{\text{nd}\xspace}
\newcommand*{\Sub}{\text{Sub}\xspace}
\newcommand*{\MOD}{\mathtt{MOD}\xspace}


\newcommand{\TODOcomment}[2]{%
  \stepcounter{TODOcounter#1}%
  {\scriptsize\bf$^{(\arabic{TODOcounter#1})}$}%
  \marginpar[\fbox{
    \parbox{2cm}{\raggedleft
      \scriptsize$^{({\bf{\arabic{TODOcounter#1}{#1}}})}$%
      \scriptsize #2}}]%
  {\fbox{\parbox{2cm}{\raggedright
      \scriptsize$^{({\bf{\arabic{TODOcounter#1}{#1}}})}$%
      \scriptsize #2}}}
}%

\newcommand{\simpleTODOcomment}[2]{%
  \stepcounter{TODOcounter#1}%
  {\bf
    \scriptsize({\arabic{TODOcounter#1}~{#1}})
    {\bfseries{TODO:} #2}
  }
}

\newcounter{TODOcounter}
\newcommand{\TODO}[1]{\TODOcomment{}{#1}}
\newcommand{\TODOX}[1]{\simpleTODOcomment{}{#1}}

\makeatletter
\newcommand{\bigcomp}{%
  \DOTSB
  \mathop{\vphantom{\sum}\mathpalette\bigcomp@\relax}%
  \slimits@
}
\newcommand{\bigcomp@}[2]{%
  \begingroup\m@th
  \sbox\z@{$#1\sum$}%
  \setlength{\unitlength}{0.9\dimexpr\ht\z@+\dp\z@}%
  \vcenter{\hbox{%
    \begin{picture}(1,1)
    \bigcomp@linethickness{#1}
    \put(0.5,0.5){\circle{1}}
    \end{picture}%
  }}%
  \endgroup
}
\newcommand{\bigcomp@linethickness}[1]{%
  \linethickness{%
      \ifx#1\displaystyle 2\fontdimen8\textfont\else
      \ifx#1\textstyle 1.65\fontdimen8\textfont\else
      \ifx#1\scriptstyle 1.65\fontdimen8\scriptfont\else
      1.65\fontdimen8\scriptscriptfont\fi\fi\fi 3
  }%
}
\makeatother


\begin{frontmatter}


\paperid{8903} 


\title{The Computational Complexity of Satisfiability in \\State Space Models}


\author[A]{\fnms{Eric}~\snm{Alsmann}\thanks{Corresponding Author. Email: eric.alsmann@uni-kassel.de.}}
\author[A]{\fnms{Martin}~\snm{Lange}\thanks{Corresponding Author. Email: martin.lange@uni-kassel.de.}}

\address[A]{University of Kassel, Germany}


\begin{abstract}
    We analyse the complexity of the satisfiability problem ssmSAT for State Space Models (SSM), which asks whether an input sequence can lead the model to an accepting configuration. We find that ssmSAT is undecidable in general, reflecting the computational power of SSM. Motivated by practical settings, we identify two natural restrictions under which ssmSAT becomes decidable and establish corresponding complexity bounds. First, for SSM with bounded context length, ssmSAT is NP-complete when the input length is given in unary and in NEXPTIME (and PSPACE-hard) when the input length is given in binary. Second, for quantised SSM operating over fixed-width arithmetic, ssmSAT is PSPACE-complete resp.\ in EXPSPACE depending on the bit-width encoding. While these results hold for diagonal gated SSM we also establish complexity bounds for time-invariant SSM. Our results establish a first complexity landscape for formal reasoning in SSM and highlight fundamental limits and opportunities for the verification of SSM-based language models.
\end{abstract}

\end{frontmatter}

\section{Introduction}

State Space Models (SSM) \cite{guMambaLinearTimeSequence2024,guEfficientlyModelingLong2022,deGriffinMixingGated2024,10.5555/3692070.3694403,mehta2023long,sunRetentiveNetworkSuccessor2023,10.5555/3618408.3619518} have recently emerged as a promising alternative to transformer-based architectures 
for sequence modelling tasks.
By relying on linear recurrences and pointwise transformations, SSM are able to capture long-range dependencies 
while maintaining computational simplicity. This structural difference compared to transformers raises new 
questions about computational properties of SSM, in particular with regards to formal verification. 
While the empirical performance of SSM has been studied extensively, formal verification remains largely 
unexplored. In particular, understanding the complexity of basic reasoning tasks such as is there an input the SSM accepts (satisfiability) or does the SSM produce a specific output if the input satisfies some property (reachability) \cite{huangSurveySafetyTrustworthiness2024}, is essential for assessing the reliability and robustness of models deployed in safety-critical 
applications. Satisfiability serves as a baseline for investigating formal 
reasoning and formal verification due to its foundational role in capturing the core computational challenges 
inherent in these tasks. As detailed by Sälzer et al. \cite{salzerTransformerEncoderSatisfiability2024}, 
satisfiability abstracts a wide class of formal reasoning problems—including verification of safety properties 
and formal interpretability—by focusing on the existence of inputs that lead a model to exhibit a specific 
behavior. This abstraction removes dependency on domain-specific properties while preserving the essential 
computational characteristics. Take for instance a robustness property of an SSM like: given some set of suspicious 
symbols $E\subseteq \Sigma$, will every input containing some symbols of $E$ be rejected? This can be reduced to 
checking the satisfiability of the negated condition, i.e., whether there exists an input containing symbols 
from $E$ that is still accepted. A more general introduction to the different verification problems, in 
particular for language models, can be found in \cite{huangSurveySafetyTrustworthiness2024}. In this way, the 
satisfiability problem provides a unifying and theoretically robust framework for analysing the decidability 
and complexity of formal reasoning across a variety of verification and interpretability scenarios.

In this paper, we initiate a systematic investigation of the satisfiability problem for SSM. The satisfiability problem asks whether there exists an input sequence that causes the model to reach an accepting configuration. We first show that, in full generality, the problem is undecidable by a reduction from the halting problem for Minsky machines. Motivated by practical settings, we then consider two natural restrictions that make the problem decidable. First, we study SSM under bounded context length, a common assumption in deployed language models. Second, we investigate SSM operating over fixed-width arithmetic, reflecting the use of quantised computation in practical implementations. We investigate these problems for different practically motivated classes of SSM.

These results establish a first complexity landscape for formal reasoning in SSM. They provide a foundation for future work on verification techniques and a better theoretical understanding of the strengths and limitations of SSM.

\paragraph{Related Work.}

There is only limited work on formal properties of State Space Models. Recent theoretical work \cite{Terzic_Hersche_Camposampiero_Hofmann_Sebastian_Rahimi_2025, sarrofExpressiveCapacityState2024a, merrillIllusionStateStatespace2024} has exclusively dealt with the expressive power of SSM. Sarrof et al.~\cite{sarrofExpressiveCapacityState2024a} showed 
that SSM can express all star-free languages and that a 
specific class of SSM exactly characterises the star-free languages.
Furthermore, Merrill et al.~\cite{merrillIllusionStateStatespace2024} proved that quantised SSM can recognise only 
languages contained in the class $\text{TC}^0$. 

Complexity questions regarding the satisfiability problem have also been investigated for transformer architectures: Sälzer et al.~\cite{salzerTransformerEncoderSatisfiability2024} studied the satisfiability problem for 
transformer encoders, establishing complexity results similar to ours. 

\section{Preliminaries}

\paragraph{Notation.} We denote the set of natural numbers as $\nats$ and the set of real numbers as $\reals$. For a vector $\bs{v}$, we denote its $i$-th component as $v_i$. We use $\bs{0}$ to denote the zero vector, and $\bs{1}$ to denote the vector with all components set to $1$. We write $\bs{0}_d$ and $\bs{1}_d$ to denote the zero and one vectors of dimension $d$, respectively. For a finite set $M$, we fix an arbitrary but fixed ordering of its elements. We define the one-hot encoding $\mathbf{e}_M \colon 2^M \to \{0,1\}^{|M|}$ such that for any subset $S \subseteq M$, the vector $\mathbf{e}_M(S)$ has $1$ in each position corresponding to an element of $S$, and $0$ elsewhere. In particular, for any single element $a \in M$, we write $\mathbf{e}_M(a) := \mathbf{e}_M(\{a\})$. We denote the $i$-th standard basis vector by $\bs{e}_i$, with its dimension being clear from the context in which it is used. 
When convenient, we define large vectors by writing them as tuples of smaller vectors. For example, $(\bs{u}, \bs{v})$ denotes the vector obtained by concatenating the entries of $\bs{u}$ and $\bs{v}$.

\paragraph{Feedforward Neural Networks.} An \emph{(FNN-)node} is a function $v\colon\reals^k\to\reals$ with $v(\bs{x}) 
= \relu(\sum_{i=1}^{k} c_i x_i + b)$, where $k$ is the \emph{input dimension}, the $c_i \in \reals$ are called \emph{weights}, 
$b \in \reals$ is the \emph{bias} and $\relu: \reals \to \reals$ with $\relu(x) = \max(0,x)$ is the \emph{activation function} of $v$.
An \emph{(FNN-)layer} $l$ is a tuple of some $n$ nodes $(v_1, \dotsc, v_n)$ where each node has the same input dimension $m$. 
It computes the function $l\colon\reals^m\to\reals^n$ via $l(\bs{x}) = (v_1(\bs{x}), \dotsc, v_n(\bs{x}))$. We call 
$m$ the \emph{input} and $n$ the \emph{output dimension} of $l$.
A \emph{Feedforward Neural Network (FNN)} $N$ consists of $k$ layers $l_1, \dotsc, l_k$, where $l_1$ has input dimension $m$, the output dimension of $l_i$ is equal to the input 
dimension of $l_{i+1}$ for $i < k$ and the output dimension of $l_k$ is $n$. The FNN $N$ computes a function from $\reals^m$ to $\reals^n$ by $N(\bs{x}) = l_k(l_{k-1}( \ldots l_1(\bs{x}) \ldots ))$.

For our lower bound constructions we sometimes need to make use of FNN checking specific properties, for instance: is some 
vector entry equal to a specific value? For easier notation we will define FNN \textit{gadgets} for specific properties we 
will need later.
\begin{lemma}
    \label{lem:fnn_gadgets}
    Let $n,n_1, \cdots, n_k, m, b\in\nats$. There are FNN
    \begin{itemize}
        \item $N_{=b}$ s.t.\ $N_{=b}(n)= 1$ if $n=b$ and $N_{=b}(n)=0$ otherwise.
        \item $N_{\leq b}$ s.t.\ $N_{\leq b}(n)=1$ if $n\leq b$ and $N_{\leq b}(n)=0$ otherwise.
        \item $N_{\land}$ s.t.\ $N_{\land}(n_1, \cdots, n_k)=1$ if $n_i=1$ for all $i\leq k$ and $N_{\land}(n_1, \cdots, n_k)=0$ otherwise.
    \end{itemize}
\end{lemma}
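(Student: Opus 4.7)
The plan is to construct each gadget as a small ReLU network, exploiting the fact that the inputs are natural numbers (so differences between distinct values are at least $1$), which lets ReLU faithfully emulate thresholding.

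For $N_{\leq b}$, I would take the two-layer network computing $\relu(1 - \relu(n-b))$. When $n \leq b$ the inner ReLU is $0$ and the output is $1$; when $n \geq b+1$, the inner ReLU is at least $1$ and so the outer expression is clipped to $0$. This uses one hidden node and one output node.

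For $N_{=b}$, I would combine two such shifts, computing $\relu\bigl(1 - \relu(n-b) - \relu(b-n)\bigr)$. Exactly one of the inner ReLUs is nonzero unless $n = b$; in the equal case both vanish and the output is $1$, while for any integer $n$ with $|n - b| \geq 1$ the subtraction makes the outer argument non-positive. Again this fits into two layers. Alternatively one can build it by composing $N_{\leq b}$ and $N_{\leq b-1}$ and subtracting, but the direct formula is cleaner.

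For $N_{\land}$, using that each input is in $\{0,1\}$, I would use the single-layer network $\relu\bigl(\sum_{i=1}^{k} n_i - (k-1)\bigr)$: if all inputs are $1$ the sum is $k$ and the output is $1$; if at least one input is $0$ the sum is at most $k-1$ and the ReLU clips to $0$.

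The only subtlety — and it is more of a bookkeeping point than an obstacle — is to ensure each construction matches the formal definition of an FNN-layer in the paper, i.e.\ the biases and weights are chosen so that each node has the same input dimension within its layer and the composition of layers has matching dimensions. This is straightforward because all three gadgets have either one or two layers of at most $k+2$ nodes. No issue arises from the activation function since on integer (or $\{0,1\}$-valued) inputs the ReLU outputs are themselves in $\{0,1\}$, so the gadgets can later be composed freely in larger constructions.
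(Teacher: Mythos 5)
Your proposal is correct and takes essentially the same approach as the paper: explicit one- or two-layer ReLU gadgets that exploit the fact that distinct natural-number inputs differ by at least $1$. The concrete formulas differ cosmetically (e.g.\ the paper uses $\relu(\relu(b+1-x)-\relu(b-x))$ for $N_{\leq b}$ and defines $N_{\land}$ by composing with $N_{=k}$, while you threshold the sum directly), and both your $N_{\land}$ and the paper's implicitly rely on the inputs being $\{0,1\}$-valued, which is how the gadget is used throughout.
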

\begin{proof}
    Let $N_{=b}$ be the FNN computing the function $N_{=b}(x)=\relu\left(\relu(x-(b-1))-2\cdot \relu(x-b)\right)$,
        $N_{\leq b}$ be the FNN computing $N_{\leq b}(x)=\relu\left(\relu(b+1-x)-\relu(b-x)\right)$ and
        $N_{\land}$ be the FNN computing $N_{\land}(x_1, \cdots, x_k)=N_{=k}(x_1 + \cdots + x_k)$.
        It is straightforward to see that the given FNN compute the required functions.
\end{proof}

We will sometimes need to combine FNN. We define two operations composition ($\circ$) and concatenation ($\|$). Let $N_1$ and $N_2$ be FNN computing functions $\reals^{m_i} \rightarrow \reals^{n_i}$ for $i\in\{1,2\}$. When $n_2=m_1$, the FNN $N_1 \circ N_2$ computes the function $\reals^{m_1} \rightarrow \reals^{n_2}$ with $N_1 \circ N_2(\bs{x})= N_1(N_2(\bs{x}))$. Syntactically, composition simply concatenates the output layer of $N_2$ and the input layer of $N_1$ setting all weights to $1$.

The FNN $N_1 \| N_2$ computes the function $\reals^{m_1+m_2} \rightarrow \reals^{n_1+n_2}$ with $N_1 \| N_2(\bs{x}_1, \bs{x_2}) = (N_1(\bs{x_1}), N_2(\bs{x_2}))$ for all $x_1\in\reals^{m_1}, x_2\in\reals^{m_2}$. Syntactically concatenation first extends $N_1$ to have the same number of layers as $N_2$ by adding identity nodes for additional layers. Then each layer of $N_1$ is extended by the respective layer of $N_2$ while zeroing incoming connections from nodes of $N_1$. 

\paragraph{State Space Models.}
We explore State Space Models as presented in \cite{guEfficientlyModelingLong2022}. For a structured analysis of this model, we will formalize its architecture following the approach in \cite{sarrofExpressiveCapacityState2024a}.

An SSM layer $l$ is defined as a tuple $(\bs{h}_0, \gate, \inc, \phi)$, where $\bs{h}_0 \in \reals^d$. The function $\gate$ is a mapping $\reals^d \rightarrow \reals^{d\times d}$, $\inc$ is a mapping $\reals^d \rightarrow \reals^d$, and $\phi$ is a mapping $\reals^d \times \reals^d \rightarrow \reals^d$. An SSM layer transforms an input sequence of vectors $\bs{x}_1 \cdots \bs{x}_k \in (\reals^d)^+$ into an output sequence $\bs{z}_1 \cdots \bs{z}_k\in (\reals^d)^+$ through the following process: it computes an intermediate sequence $\bs{h}_1\cdots\bs{h}_k \in (\reals^d)^+$ via a linear recurrence
$$
\bs{h}_t = \gate(\bs{x}_t) \cdot\bs{h}_{t-1} + \inc(\bs{x_t}) \quad \text{for } 1\leq t \leq k
$$
and subsequently generates the output via $\bs{z}_t = \phi(\bs{h}_t, \bs{x}_t)$.

An SSM comprising $L$ layers is expressed as a tuple $(\emb, l_1, \cdots, l_L, \out)$. Each $l_i$ refers to a layer 
as defined above, while $\emb$ is a function $\Sigma \rightarrow \reals^d$, and $\out$ is a function 
$\reals^d \rightarrow \reals^d$, computed by an FNN. The SSM computes a function $\Sigma^+ \rightarrow \reals$ as 
described: let $w=a_1 \cdots a_k\in\Sigma^+$ be a word. Initially, the SSM computes the embedding $\bs{x}_1^0\cdots \bs{x}_k^0$ 
of the word by setting $\bs{x}_i^0=\emb(a_i)$. Subsequently, for each layer $1\leq j\leq L$: compute 
$\bs{z}_1^j\cdots \bs{z}_k^j = l_j (\bs{x}_1^{j-1} \cdots \bs{x}_k^{j-1})$. Each layer’s output serves as the input for the
succeeding layer, meaning $\bs{x}_1^{j+1}\cdots \bs{x}_k^{j+1} = \bs{z}_1^{j}\cdots \bs{z}_k^{j}$. The SSM's final output, 
denoted as $\bs{y}_1\cdots \bs{y}_k$, is derived by applying $\out$ element-wise: $\bs{y}_i = \out(\bs{z}_i^L)$. In the end, 
the output of $\mathcal{S}(w)$ is computed by $\bs{y}_k$. We say that $\mathcal{S}$ accepts a word $w$ if $\mathcal{S}(w)=1$. 
Otherwise it is rejected. Let $\SSM$ be some class of SSM. We define the decision problem $\ssmSAT[\SSM]$ as the problem, given 
an SSM $\mathcal{S}\in\SSM$ over an alphabet $\Sigma$, decide whether there is $w\in\Sigma^*$ such that $\mathcal{S}(w)=1$.

To establish upper complexity bounds, we define the representation size of an SSM. Given an SSM over an 
alphabet $\Sigma$ with dimension $d$ and number of layers $L$, we set $|\mathcal{S}| = |\Sigma| + L + d$. We 
assume that the syntactic representation of $\mathcal{S}$ is polynomial in $|\mathcal{S}|$.
For any input word, the output of an SSM is computed layer-wise, with each layer requiring only a linear number 
of operations. This gives rise to the \emph{polynomial-evaluation property}: for any word $w \in \Sigma^*$, the 
value $\mathcal{S}(w)$ can be computed in time polynomial in $|\mathcal{S}| + |w|$. Note that we assume that each SSM can be finitely represented. This implies that all weights and constants used in the SSM have a finite representation.

\paragraph{Fixed-Width Arithemtic.}
Our notion of fixed-width arithmetics are representations of numbers using a fixed amount of bits, like floating- or fixed-point arithmetic. Our results are independent of the specific choice of an implementation. We assume that all values represented in a fixed-width arithmetic use $b$ bits for representing numbers. We say that an SSM works over fixed-width arithmetics if all computations and values occuring in the computation of the SSM are carried out using only $b$ bits.

\paragraph{Minsky-Machines.}
A Minsky machine $M=(Q, q_0, q_f, \delta)$ consists of a finite set of states $Q$, a start state $q_0\in Q$, a final state 
$q_f\in Q$ and a finite set of transitions $\delta \subseteq Q\times \text{Act} \times Q$ where 
$\text{Act}= \{\minc_1, \minc_2, \mdec_1, \mdec_2, \mztest_1, \mztest_2\}$. We additionally assume that for every state 
$q\in Q$ there is either only one $i\in\{1,2\}$  and one transition with action $\inc_i$ or two transitions with actions 
$\mdec_i$ and $\mztest_i$. A configuration of $M$ is a tuple $(q,c_1, c_2) \in Q\times \nats^2$, consisting of the current 
state and two counter values. A step of $M$ $(q, c_1, c_2) \vdash (q', c'_1, c'_2)$ is valid if (1) $(q, \inc_i, q')\in \delta$ 
and $c'_i = c_i+1$, (2) $(q, \mdec_i, q')\in \delta$, $c_i>0$ and $c'_i=c_i-1$ or (3) $(q, \mztest_i, q')\in \delta$, $c_i=0$ 
and $c'_i=c_i$ for $i\in\{1,2\}$. Note that the other counter in each step remains unchanged. A run of $M$ is a sequence valid steps starting from $(q_0, 0, 0)$. A run is accepting if it 
ends in a configuration with state $q_f$.

The problem to decide, given a Minsky machine $M$, whether it has an accepting 
run, is called $\MHALT$. Note that due to our definition of a Minsky machine $M$, given a specific state of $M$ there is always 
only one possible successor state. Thus, we can describe an accepting run of $M$ as a sequence of state-action pairs rather 
than full configurations.

\begin{theorem}[\cite{minskyComputationFiniteInfinite1967}] 
\label{thm:minskyundec}
    \MHALT is undecidable.
\end{theorem}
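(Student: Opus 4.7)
The plan is to reduce from the halting problem for deterministic Turing machines, which is well-known to be undecidable. I would proceed in two stages: first show that $k$-counter machines (for some sufficiently large $k$, typically $k=3$ or $k=4$) augmented with $\minc$, $\mdec$ and $\mztest$ primitives can simulate an arbitrary Turing machine, and then compile such a $k$-counter machine down to a Minsky machine with exactly two counters as defined in the preliminaries.

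For the first stage, given a Turing machine $T$ with tape alphabet $\Gamma$, I would represent a configuration of $T$ using one counter for the tape contents to the left of the head (viewed as an integer written in base $|\Gamma|$), one counter for the contents to the right, and auxiliary counters for scratch space and for holding the current symbol. Reading the symbol under the head reduces to extracting the least-significant base-$|\Gamma|$ digit via repeated subtraction and a $\mztest$; writing a symbol and moving the head reduces to multiplying one counter by $|\Gamma|$ while dividing the other by $|\Gamma|$. The state of $T$ becomes part of $Q$ of the Minsky machine, and the designated halting state of $T$ is mapped to $q_f$, so that $T$ accepts iff the constructed counter machine reaches $q_f$.

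The hard part is the second stage: squeezing $k$ counters into $2$. Here I would use Minsky's classical Gödel-encoding trick, storing the tuple $(c_1, \dots, c_k)$ as $n = p_1^{c_1} \cdots p_k^{c_k}$ in the first counter, using the second counter purely as scratch. Each $\minc_i$ becomes ``multiply $n$ by $p_i$'', each $\mdec_i$ becomes ``divide $n$ by $p_i$'', and each $\mztest_i$ becomes ``check whether $p_i$ divides $n$''. Every such macro can be implemented with only the three allowed primitives by repeatedly shuffling units between the two counters and counting residues. The delicate point is that, in the definition fixed above, each state of a Minsky machine has either a single $\minc_i$-successor or the pair $\{\mdec_i, \mztest_i\}$, so the macros must be expressed as deterministic sequences of primitive instructions; this is a routine but tedious bookkeeping exercise, and introduces only a polynomial blow-up in $|Q|$.

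Finally, I would observe that the whole construction is effective, so any decision procedure for $\MHALT$ would yield a decision procedure for Turing machine halting, contradicting Turing's theorem. Since the full argument is standard and due to Minsky, in the paper one simply cites \cite{minskyComputationFiniteInfinite1967} rather than reproducing these details.
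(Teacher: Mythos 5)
Your outline is the classical argument from the cited reference \cite{minskyComputationFiniteInfinite1967}: simulate a Turing machine by a multi-counter machine via the two-tape-halves encoding, then collapse to two counters via the prime-power G\"odel encoding, handling the paper's deterministic transition format by routine macro expansion. The paper itself gives no proof and simply cites Minsky, so your proposal matches the intended (standard) proof and is correct.
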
 

\paragraph{Linear Temporal Logic $\LTLf$.} Let $\mathcal{P}$ be a finite set of atomic propositions. The syntax of $\LTLf$ is defined as follows:
\begin{align*}
    \varphi &::= p \mid \neg \varphi \mid \varphi \lor \varphi \mid \ltlnext \varphi \mid \varphi \until  \varphi
\end{align*}

Formulas of $\LTLf$
are interpreted over finite words over the alphabet $\Sigma = 2^{\mathcal{P}}$. Given a word $w=a_1\cdots a_n\in \Sigma^*$ and $i\in[n]$, the semantics of $\LTLf$ is inductively defined as follows:
\begin{alignat*}{2}
    w,i &\models p &&\iff p \in a_i \\
    w,i &\models \neg \varphi &&\iff w,i \not\models \varphi \\
    w,i &\models \varphi \land \psi &&\iff w,i \models \varphi \text{ and } w,i \models \psi\\
    w,i &\models \ltlnext\varphi &&\iff i< n \text{ and } w,i+1 \models \varphi\\
    w,i &\models \varphi\until\psi &&\iff  \text{ex. } i\leq k \leq n: w,k\models \psi \text{ and } \\
    & && \quad\qquad \text{f.a } \; i \leq j < k: w,j\models \varphi
 \end{alignat*}
We say that $w$ is a model of $\varphi$ iff $w,1\models \varphi$ (or simply $w\models \varphi$). For easier notation we define $\llbracket \varphi \rrbracket^w=1$ if $w \models \varphi$ and $0$ otherwise. We may omit the superscript if $w$ is clear from context. Furthermore, formula $\varphi$ is satisfiable iff there exists $w\in\Sigma^*$ such that 
$w$ is a model of $\varphi$. The size of $\varphi$ is denoted by $|\varphi|$. 

\begin{theorem}[\cite{degiacomoLinearTemporalLogic2013}]
    Satisfiability for $\LTLf$ is \PSPACE-complete.
\end{theorem}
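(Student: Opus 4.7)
The plan is to prove both directions. For the upper bound I would adapt the classical automata/tableau construction for \LTL to the finite-word case. Given an input $\varphi$, let $\Sub(\varphi)$ be the set of its subformulas. An \emph{atom} is a subset $A\subseteq \Sub(\varphi)$ that is locally boolean-consistent, i.e.\ decides for every subformula whether it should be true and agrees with the interpretation of $\neg$ and $\lor$. Each atom has size $O(|\varphi|)$ and the set of atoms has size at most $2^{|\varphi|}$. I would then describe a nondeterministic algorithm that guesses an atom sequence $A_1,A_2,\dots,A_n$ on-the-fly, keeping in memory only the current atom, the just-guessed next atom, a binary counter, and a bit per pending eventuality. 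At each transition it enforces (i) $\ltlnext\psi \in A_i$ implies $\psi \in A_{i+1}$, and (ii) $\psi \until \chi \in A_i$ iff $\chi \in A_i$, or both $\psi \in A_i$ and $\psi \until \chi \in A_{i+1}$.

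At every position the algorithm nondeterministically decides whether $A_i$ is the last one. If so, it must verify that no $\ltlnext\psi$ appears in $A_i$ (since strong next is false at the end of a finite word) and that every $\psi \until \chi \in A_i$ already has $\chi \in A_i$, so no eventuality remains pending. Acceptance requires $\varphi \in A_1$ and that every until-promise raised along the trace is discharged before termination. A standard contraction argument on repeated atoms shows that satisfiable formulas admit a model of length at most $2^{|\varphi|}$, so the counter needs only polynomially many bits. The resulting algorithm runs in nondeterministic polynomial space, and Savitch's theorem yields the \PSPACE upper bound.

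For the lower bound I would reduce from \textsc{QBF}, adapting the Sistla--Clarke encoding used for \LTL. A word encodes a sequence of truth assignments to the quantified variables; the temporal operators, especially $\until$, are used to force, for each universally quantified variable, both of its possible extensions to appear later along the word, so that exponential branching is compressed into a polynomial-size formula. The finite-word setting causes no difficulty: after the innermost propositional matrix has been verified on every assignment encoded along the trace, one simply lets the word terminate.

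The main obstacle will be getting the end-of-word semantics right in the local transition relation of the tableau: strong $\ltlnext$ behaves differently than in infinite \LTL, and pending $\until$-obligations must be discharged strictly before the last position. Tightly phrasing these conditions and proving the matching small-model property (length $\le 2^{|\varphi|}$) is the delicate part; the rest closely mirrors the textbook \PSPACE decision procedure for \LTL over infinite words.
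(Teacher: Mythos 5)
The paper does not prove this statement at all: it is imported verbatim from De Giacomo and Vardi's 2013 paper, so there is no ``paper proof'' to match against. Your sketch is nevertheless a correct outline of a standard self-contained argument, and it is worth noting that it takes a different route from the cited source. De Giacomo and Vardi obtain the upper bound by translating an $\LTLf$ formula into an equisatisfiable \LTL formula over infinite words (using a fresh ``end''/``alive'' proposition to mark where the finite trace stops) and then invoking the known \PSPACE procedure for \LTL; you instead build the finite-trace tableau directly, which is more work to state precisely but avoids the detour through infinite-word semantics. Your end-of-word conditions are the right ones for the paper's strong-next semantics (no $\ltlnext\psi$ may be true in the last atom, and every $\varphi\until\psi$ in the last atom must have $\psi$ there). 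One simplification you are entitled to: the ``bit per pending eventuality'' is unnecessary over finite traces, since the local unfolding $\varphi\until\psi \equiv \psi \lor (\varphi\land\ltlnext(\varphi\until\psi))$ together with the last-atom check already forces every until to be discharged --- the Büchi-style fairness bookkeeping is an artefact of the infinite-word case. Your contraction argument also gives the small-model bound ($\le 2^{|\varphi|}$, consistent with the paper's Lemma~\ref{lem:ltl_small_model}), and the Sistla--Clarke QBF reduction for hardness transfers to finite traces as you say. In short: correct in outline, genuinely different from (and more elementary than) the reduction-based proof the paper cites.
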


\section{Overview}

\newtheorem{innercustomthm}{Theorem}
\newenvironment{customthm}[1]
{\renewcommand\theinnercustomthm{#1}\innercustomthm}
{\endinnercustomthm}

\begin{table*}[t]
    \caption{Overview of the results established in this paper.}
    \label{fig:results}
    \vspace{1em}
    \centering
        \setlength{\tabcolsep}{10pt}
\renewcommand{\arraystretch}{1.5}
        \begin{tabular}{lcc}
        \toprule
        & \textbf{time-invariant SSM} (\tiSSM) & \textbf{diagonal SSM} (\diagSSM) \\
        \midrule
        arbitrary precision & undecidable (Thm. \ref{thm:undec_ssm}) & undecidable (Thm. \ref{thm:undec_ssm})\\ 
        $\log$-precision & undecidable (Thm. \ref{thm:undec_ssm_log}) & undecidable (Thm. \ref{thm:undec_ssm_log})\\
        \midrule
        bounded context length (unary) & \NP-complete (Cor.~\ref{cor:bc_un_np_compl}) & \NP-complete (Cor.~\ref{cor:bc_un_np_compl})\\
        bounded context length (binary) & \NP-hard (Thm.~\ref{thm:bc_ssm}), $\in\NEXPTIME$ (Thm.~\ref{thm:bc_bin_nexptime}) & \PSPACE-hard (Thm.~\ref{thm:bc_bin_pspace_hard}), $\in\NEXPTIME$ (Thm.~\ref{thm:bc_bin_nexptime})\\
        \midrule
        constant bit-width & $\in \PSPACE$ (Thm.~\ref{thm:fix_const_mem_pspace}) & \PSPACE-complete (Cor.~\ref{cor:fix_diag_pspace_compl}) \\
        \midrule
        fixed bit-width (unary) & \NP-hard (Thm.~\ref{thm:fix_un_ti_np_hard}), $\in\PSPACE$ (Thm.~\ref{thm:fix_const_mem_pspace})& \PSPACE-complete (Cor.~\ref{cor:fix_diag_pspace_compl})\\
        fixed bit-width (binary) & \NP-hard (Thm.~\ref{thm:fix_un_ti_np_hard}), $\in\EXPSPACE$ (Thm.~\ref{thm:fix_bin_expspace})& \PSPACE-hard, $\in \EXPSPACE$ (Thm.~\ref{thm:fix_bin_expspace}) \\
        \bottomrule
        \end{tabular}
\end{table*}

Recent work on the expressiveness of SSM 
\cite{merrillIllusionStateStatespace2024,sarrofExpressiveCapacityState2024a} distinguishes between two main 
classes of SSM which differ in the allowed $\gate$ functions. The class of \textit{time-invariant} SSM
\cite{mehta2023long,sunRetentiveNetworkSuccessor2023,10.5555/3618408.3619518} (which we will denote \tiSSM) 
only allow $\gate$ functions such that there is a matrix $A\in\reals^{d\times d}$ with $\gate(\bs{x})=A$ 
for all $\bs{x}\in\reals^d$.
The class of \textit{diagonal} SSM \cite{guMambaLinearTimeSequence2024,deGriffinMixingGated2024,10.5555/3692070.3694403}
(which we denote \diagSSM) only use $\gate$ functions such that $\gate(\bs{x})$ is a diagonal matrix but can depend on $\bs{x}$. The output 
of each layer is computed by a non-linear function $\phi$ which also gets the initial input modelling a residual 
connection. In practice, different kinds of non-linear functions are used. In this paper we assume $\phi$ to be 
represented by an FNN. Despite this assumption, all upper complexity bounds established in this paper also hold 
for any other non-linear activation function used in practice.

An overview of the results established in this paper is depicted in Table~\ref{fig:results}. In Section \ref{sec:undecidability}, we establish the undecidability of the satisfiability problem for SSM (Thm.~\ref{thm:undec_ssm}). Our proof is based on a reduction from the halting problem for Minsky machines (\MHALT), and it proceeds by constructing, for each instance of a Minsky machine $M$, an SSM that accepts precisely those sequences that encode accepting runs of $M$. Finally, we show in Theorem~\ref{thm:undec_ssm_log} that this undecidability result persists even under finite-width arithmetic, provided the precision grows logarithmically with the input length.

In Section \ref{sec:decidability}, we turn our attention to decidable fragments of the problem. We consider two natural restrictions motivated by practical applications: bounding the input length and restricting arithmetic to fixed-width representations. In both cases we additionally make a distinction between unary and binary encodings of the word length or bit-width. While this distinction may seem subtle, it has significant implications for complexity. Intuitively, unary encoding corresponds to a setting where the available computational resources—such as the length of input sequences or the precision of arithmetic—are given explicitly and can be directly used during computation. In contrast, binary encoding specifies the size of these resources succinctly: we are told how much space we are allowed to use, but we must first construct or simulate that space ourselves.

For the bounded context length variant $\ssmSAT^{\leq}_{\{\text{un}, \text{bin}\}}[\SSM]$, we established that the unary-encoded version is \NP-complete (Corollary~\ref{cor:bc_un_np_compl}) for time-invariant and diagonal gated SSM. When considering binary encoding, we establish membership in \NEXPTIME\ (Theorem~\ref{thm:bc_bin_nexptime}) and \PSPACE-hardness (Theorem~\ref{thm:bc_bin_pspace_hard}) for diagonal SSM as well as \NP-hardness for time-invariant SSM (Theorem~\ref{thm:bc_ssm})

In the second setting, we study SSM operating over fixed-width arithmetic. We establish an exponential-model property (Lemma~\ref{lem:ssm_exp_model}), which ensures that if a word is accepted under $b$-bit precision, there exists an accepting word of length at most exponential in $|\mathcal{S}| + b$. We use this to prove that when the bit-width $b$ is constant or given in unary, the problem lies in \PSPACE\ (Theorem~\ref{thm:fix_const_mem_pspace}), while for binary-encoded bit-width the complexity increases to \EXPSPACE\ (Theorem~\ref{thm:fix_bin_expspace}).

To complement these upper bounds, we develop a polynomial-time reduction from the satisfiability problem for linear-time temporal logic over finite traces ($\LTLf$) to SSM satisfiability (Lemma~\ref{lem:ltl_ssm_reduction}). This construction relies on diagonal gated SSM with constant precision and shows that the corresponding fixed-width satisfiability problems are all \PSPACE-hard (Theorem~\ref{thm:fix_const_compl_pspace}). This yields \PSPACE-completeness for constant and unary bit-width (Corollary~\ref{cor:fix_diag_pspace_compl}), and \PSPACE-hardness for binary bit-width, leaving a gap to the \EXPSPACE\ upper bound.

For time-invariant SSM we establish \NP-hardness (Theorem~\ref{thm:fix_un_ti_np_hard}) for the cases with fixed-width arithmetic. Possible reasons for this complexity gap between time-invariant and diagonal gated SSM are discussed at the end of Section \ref{sec:decidability}. 
Our results provide a comprehensive classification of decidable fragments of the SSM satisfiability problem and reveal fine-grained complexity distinctions based on model structure and input encoding.

\section{Undecidability}
\label{sec:undecidability}

In this section, we will show undecidability for time-invariant as well as diagonal gated SSM working over arbitrary precision, by reducing from the halting problem for Minsky machines. Furthermore, it is shown that the undecidability still holds if we allow the precision to grow logarithmically with the input length. The reduction constructs an SSM which recognises accepting runs of a Misnky machine. In a first step, we will show how runs of a Minsky machine can be encoded as a word over a finite alphabet.

\begin{lemma}
    \label{lem:undec_l1}
    Let $M=(Q, q_0, q_f, \delta)$ be a Minsky-machine. There is an embedding function $\emb$ and an SSM layer $l_1\in \tiSSM \cap \diagSSM$ such that for each word $w=(q_1, a_1)\cdots(q_n, a_n) \in \Sigma^*$ with $\Sigma = \{(q', a) \mid \exists q\in Q : (q, a, q')\in \delta\}$ we have $l_1(\emb(w)) = \bs{z}_1\cdots\bs{z}_{|w|}$ with $\bs{z}_i = (\bs{e}_Q(q_i), \bs{e}_Q(q_i), \bs{e}_\Act(a), c^1_i, c^2_i, 0)$.
\end{lemma}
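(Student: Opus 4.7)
The plan is to let the layer's linear recurrence accumulate the two counter values while, in parallel, overwriting the state and action components at every step so that each output $\bs{z}_t$ reflects only the current symbol for those blocks. Because $l_1$ must lie in $\tiSSM \cap \diagSSM$, its gate must simultaneously be constant in $\bs{x}$ and diagonal; this is exactly what the above strategy needs---zeros on the diagonal for components to be overwritten, ones for components to be accumulated---so the two class memberships will be obtained for free.

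Concretely, set $d = 2|Q| + |\Act| + 3$ and define
$$\emb(q,a) = \bigl(\bs{e}_Q(q),\ \bs{e}_Q(q),\ \bs{e}_\Act(a),\ \Delta_1(a),\ \Delta_2(a),\ 0\bigr),$$
where $\Delta_i(a) = 1$ if $a = \minc_i$, $\Delta_i(a) = -1$ if $a = \mdec_i$, and $\Delta_i(a) = 0$ otherwise. Let $\gate$ be the constant diagonal matrix $A$ with $0$ on the $2|Q|+|\Act|$ state/action positions and on the final slack position, and $1$ on the two counter positions; let $\inc$ be the identity and $\bs{h}_0 = \bs{0}_d$; and let $\phi(\bs{h},\bs{x}) = \bs{h}$, realisable by a small FNN (the identity on non-negative entries is a single $\relu$ node, and on possibly signed entries one uses the standard two-layer $\relu(x)-\relu(-x)$ decomposition). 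A short induction on $t$ then yields that the three one-hot blocks of $\bs{h}_t$ are overwritten to the current embedding while the counter entries obey $c^i_t = c^i_{t-1} + \Delta_i(a_t)$, which is exactly the tuple claimed for $\bs{z}_t$.

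I do not expect any significant obstacle: everything reduces to checking by induction that the recurrence produces the claimed tuple and to observing that $\gate$, being a constant diagonal matrix, places $l_1$ in $\tiSSM \cap \diagSSM$. The only mild technical care concerns the implementation of $\phi$ as an FNN, since every node in the definition applies $\relu$; this is handled either by the identity gadget above, or---if $c^i_t$ is read as the $\relu$-clipped partial sum that the subsequent layers in Section~\ref{sec:undecidability} consume for their validity checks---by taking $\phi$ to be literally the projection onto its first argument.
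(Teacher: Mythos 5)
Your proposal is correct and essentially identical to the paper's own construction: the same embedding, the same constant diagonal gate with ones only on the two counter coordinates (the paper's $E_{[2|Q|+7,2|Q|+8]}$), identity $\inc$ and identity $\phi$, with the counters accumulating by induction. The only addition is your explicit FNN realisation of the identity output map; just make sure to use the signed gadget $\relu(x)-\relu(-x)$ rather than the $\relu$-clipped alternative you mention, since the lemma requires the true (possibly negative) partial sums $c^i_t$.
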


\begin{proof}

    Let $M=(Q, q_0, q_f, \delta)$ be a Minsky machine. We define $\Sigma=\{(q', a) \mid \exists q \in Q. (q, a, q') \in \delta\}$. Therefore, each word $w\in\Sigma^*$ describes a potential run of $M$, skipping the starting state. We now define An SSM $\mathcal{S}$ with one layer and $d=2|Q|+9$ dimensions which computes the counter states for the given run encoded in $w$. Let $\emb$ be the function $\Sigma \rightarrow \reals^d$ with:
    $$
        \emb((q, a)) = (\bs{e}_Q(q), \bs{e}_{Q}(q), \bs{e}_{\Act}(a), u_1(a), u_2(a), 0)
    $$
    where $u_i: \text{Act} \rightarrow \{-1, 0, 1\}$ with $u_i(\inc_i)=1$, $u_i(\mdec_i)=-1$, and $u_i(\mztest_i)=0$. Other inputs are mapped to zero.
    Let $l_{1} \in \tiSSM \cap \diagSSM$ computing 
    $$
        \bs{h}_t = E_{[2|Q|+7, 2|Q|+8]} \cdot \bs{h}_{t-1} + I_d \cdot \bs{x}_t \quad \phi(\bs{x}_t, \bs{h}_t)=\bs{h}_t
    $$
    where $E_{[i,j]}$ is the modified identity matrix with $E_{k,k}= 1$ if $i\leq k\leq j$ and $0$ otherwise. During this computation only the counter dimension get accumulated, resulting in 
    $$
        h_{t} = \left(\bs{e}_Q(q_t), \bs{e}_Q(q_t), \bs{e}_{\Act}(a), \sum_{1\leq i \leq t} u_1(a_i), \sum_{1\leq i \leq t} u_2(a_i), 0\right)
    $$
    This concludes that $l_1$ together with the embedding function, maps a potential run of $M$ in form of a state-action pair sequence to a sequence of vectors each containing the current state, the action and the intermediate counter values.
\end{proof}

Once we have encoded the current machine state and the corresponding counter values, we have to check whether the state-action pairs from the input are valid transitions with respect to $\delta$. Moreover, it has to be checked if the transitions taken were allowed with respect to the counter values.

To verify these properties, we must transfer information about the previous state into the vector at the next position. This can be achieved using a trick, which involves encoding the information about the previous state into the binary expansion of a value. This information can then be decoded back into a state using an FNN.

\begin{lemma}
    \label{lem:undec_prev}
    There is an SSM layer $l_{\text{prev}} \in \tiSSM \cap \diagSSM$ that maps a sequence $x_1 \cdots x_k \in \{0,1\}^*$ to a sequence $z_1 \cdots z_k \in \{0,1\}^*$ such that $z_i = x_{i-1}$ for $0 < i \leq k$ and $x_0 = 0$.
\end{lemma}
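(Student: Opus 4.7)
The plan is to exploit a contractive linear recurrence whose hidden state stores the entire input history in a weighted sum, with the previous bit occupying the most significant position of the fractional expansion. Concretely, I would work in dimension $d = 1$, pick any constant $c \in (0, 1/2)$ — for definiteness $c = 1/4$ — and set $\bs{h}_0 = 0$, $\gate(\bs{x}) = c$ (a constant $1\times 1$ matrix, hence both time-invariant and diagonal) and $\inc(\bs{x}) = \bs{x}$. Unrolling the recurrence then yields
\[
h_t \;=\; x_t \,+\, c\, x_{t-1} \,+\, c^2 x_{t-2} \,+\, \cdots \,+\, c^{t-1} x_1.
\]
Since $\bs{x}_t$ is also passed to $\phi$, I would subtract it off and examine $z := h_t - x_t$.

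The crux is a uniform gap argument. If $x_{t-1} = 1$ then $z \ge c = 1/4$, while if $x_{t-1} = 0$ then $z \le \sum_{k \ge 2} c^k = c^2/(1-c) = 1/12$. This $[1/12, 1/4]$ gap does not shrink with $t$, so a fixed ReLU-FNN suffices to separate the two cases. I would take
\[
\phi(\bs{h}_t, \bs{x}_t) \;=\; \relu\!\big(6(h_t - x_t) - \tfrac{1}{2}\big) \,-\, \relu\!\big(6(h_t - x_t) - \tfrac{3}{2}\big),
\]
which is realisable as a two-layer ReLU network on input $(\bs{h}_t,\bs{x}_t)$; a direct check shows this evaluates to $0$ whenever $z \in [0, 1/12]$ and to $1$ whenever $z \in [1/4, 1/3]$, where $1/3 = c/(1-c)$ is the largest value $z$ can attain. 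At $t = 1$, the initial condition $\bs{h}_0 = 0$ forces $h_1 = x_1$ and hence $z = 0$, giving output $0 = x_0$, consistent with the boundary convention.

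The remaining work is routine: verify the closed form for $h_t$ by induction, derive the two bounds on $z$ from the geometric series, and confirm the displayed ReLU identity on the three relevant regions. The main obstacle is conceptual rather than technical: a ReLU-FNN can only compute a continuous piecewise-linear function, so without an input-length-independent gap between the two regimes an exact $\{0,1\}$-valued readout is impossible. Any choice $c \ge 1/2$ closes the gap in the limit (e.g.\ $c = 1/2$ gives $c^2/(1-c) = 1/2 = c$) and destroys the argument, whereas the contraction $c < 1/2$ yields a fixed positive separation $c - c^2/(1-c)$ and is therefore what makes this simple "accumulate with geometric weights" recipe both fit inside $\tiSSM \cap \diagSSM$ and produce exact previous-bit outputs.
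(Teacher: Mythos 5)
Your proposal is correct and follows essentially the same route as the paper: the identical recurrence $h_t = \tfrac{1}{4}h_{t-1} + x_t$ storing the history in the binary/geometric expansion, followed by an interval-separation argument ($z \le 1/12$ vs.\ $z \ge 1/4$) resolved by a fixed piecewise-linear ReLU readout. The only cosmetic difference is that you subtract $x_t$ inside $\phi$ to collapse the paper's four intervals into two; the paper additionally stresses that the factor $\tfrac{1}{4}$ (rather than $\tfrac12$) buffers each history bit with a zero so the same readout survives $6$-bit fixed-width arithmetic, a property it needs later but which the lemma statement itself does not demand.
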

\begin{proof}[Proof Sketch.]
    To achieve this, we adopt an encoding technique used by \cite{sarrofExpressiveCapacityState2024a}, which encodes 
    the previous bit in the binary expansion of a value. This encoding allows the retrieval of historical information 
    using a simple FNN. This also works in a setting when finite-width arithmetic is used by ensuring that each bit of the history is separated by a zero in the 
    binary expansion, which avoids false results due to rounding errors and enables precise recovery via thresholding. The SSM layer computes the recurrence:
    $$
        h_t = \frac{1}{4}\cdot h_{t-1} + x_t
    $$
    This recurrence corresponds to a right shift of the binary expansion of $\bs{h}_t$ by two positions. This encodes the full history of $\bs{x}_1\cdots \bs{x}_{t-1}$ into the binary expansion of $\bs{h}_t$. Subsequently the previous bit can be read out by an FNN even when using fixed-width arithmetic with a bit-width of at least six. More details can be found in the Appendix.

\end{proof}

We now combine the previous lemmas to construct an SSM that verifies whether a given sequence of state-action pairs encodes an accepting run of a Minsky machine.

\begin{theorem}
    \label{thm:undec_ssm}
    $\ssmSAT[\tiSSM \cap \diagSSM]$ is undecidable.
\end{theorem}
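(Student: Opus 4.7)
The plan is a reduction from $\MHALT$. Given a Minsky machine $M=(Q,q_0,q_f,\delta)$ I would construct an SSM $\mathcal{S}_M \in \tiSSM\cap\diagSSM$ over $\Sigma = \{(q',a)\mid \exists q.\,(q,a,q')\in\delta\}$ accepting exactly those words $w$ that encode an accepting run of $M$ starting from $(q_0,0,0)$. Since $\MHALT$ is undecidable by Theorem~\ref{thm:minskyundec}, this suffices.

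I would stack the construction in three layers plus an output FNN. The first layer is the one from Lemma~\ref{lem:undec_l1}, producing at each position $i$ a vector carrying $\bs{e}_Q(q_i)$, $\bs{e}_{\Act}(a_i)$, and the running counters $c^1_i,c^2_i$. The second layer applies a componentwise variant of Lemma~\ref{lem:undec_prev} to the state- and counter-coordinates, so that at each position $i\geq 2$ the vector additionally exposes $q_{i-1}$ and $c^1_{i-1},c^2_{i-1}$; an extra channel carrying a constant $1$ run through the same shift yields an ``is-first-position'' flag that compensates for the zero predecessor at $i=1$. In the same layer, $\phi$ is an FNN that computes a local validity bit $v_i\in\{0,1\}$ asserting that $(q_{i-1},a_i,q_i)\in\delta$ (or, when the is-first flag fires, $(q_0,a_1,q_1)\in\delta$) and that the guard of $a_i$ is satisfied, i.e.\ $c^k_{i-1}\geq 1$ for $a_i=\mdec_k$ and $c^k_{i-1}=0$ for $a_i=\mztest_k$. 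These are finite boolean combinations of equality and $\leq$-tests on the exposed coordinates, hence realisable with the gadgets of Lemma~\ref{lem:fnn_gadgets}.

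A third layer accumulates the number of invalid positions via the linear recurrence $b_i = b_{i-1} + (1-v_i)$, which is both time-invariant and diagonal (the gate is the identity, the increment copies the $1-v_i$ coordinate), and carries $\bs{e}_Q(q_i)$ through unchanged. The output FNN $\out$ then returns $N_{\land}(N_{=0}(b_i), N_{=1}(q_f\text{-coordinate of }\bs{e}_Q(q_i)))$, which at the final position $i=n$ outputs $1$ iff no step was invalid and the last state is $q_f$, precisely the condition that $w$ encodes an accepting run. Each gate used is either the identity or the scalar $\tfrac14 I$ of Lemma~\ref{lem:undec_prev}, so $\mathcal{S}_M \in \tiSSM\cap\diagSSM$, and the reduction is clearly computable in polynomial time in $|M|$.

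The main technical obstacle is combining non-local checks (previous-state lookup, transition legality, guard satisfaction, global error aggregation) with the very restrictive recurrence form allowed in $\tiSSM\cap\diagSSM$, where the gate is forced to be simultaneously input-independent and diagonal. My plan sidesteps this by keeping every recurrence strictly input-independent and linear and relegating all non-linear bookkeeping to the $\phi$ and $\out$ FNNs, using the binary-encoding trick of Lemma~\ref{lem:undec_prev} for the single genuinely temporal piece of information (the predecessor data) and an auxiliary $1$-flag to handle the first-position boundary case.
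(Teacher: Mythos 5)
Your plan is the paper's own reduction: layer one from Lemma~\ref{lem:undec_l1} to accumulate counters, the binary-shift predecessor trick of Lemma~\ref{lem:undec_prev} to expose the previous state, a pointwise FNN computing a local validity bit, a final accumulating layer, and an output FNN testing ``no errors and last state is $q_f$.'' The one step that would fail as written is applying ``a componentwise variant of Lemma~\ref{lem:undec_prev} to the counter-coordinates'': that lemma's recurrence $h_t=\tfrac14 h_{t-1}+x_t$ recovers $x_{t-1}$ only because each $x_t\in\{0,1\}$ occupies a single bit of the binary expansion, whereas the counters $c^k_i$ are unbounded integers whose binary representations would overlap the shifted history, so the decoding FNN cannot recover $c^k_{i-1}$. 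Fortunately you do not need the previous counters at all: since $c^k_i$ is already the post-action value, the guard for $\mdec_k$ is equivalent to $c^k_i\geq 0$ and the guard for $\mztest_k$ to $c^k_i=0$, which is exactly how the paper checks validity using only the current position. With that substitution (and your is-first-position flag is a fine alternative to the paper's choice of initialising $\bs{h}^2_0$ with $\bs{e}_Q(q_0)$), the argument goes through.
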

\begin{proof}[Proof Sketch]
    We establish a reduction from \MHALT to $\ssmSAT[\tiSSM \cap \diagSSM]$. For each instance $M=(Q, q_0, q_f, \delta)$ of \MHALT we construct an SSM $\mathcal{S}_M \in \tiSSM \cap \diagSSM$, which accepts only valid encodings of accepting runs of $M$.

    $\mathcal{S}_M$ has $2|Q|+9$ dimensions and three layers. The embedding function $\emb$ and layer $l_1$ are given by Lemma \ref{lem:undec_l1} and are used to encode the state and the current counter values in each vector. Layer $l_2$, based on Lemma \ref{lem:undec_prev}, retrieves the previous state for each configuration. By setting $\bs{h}^2_0$ to encode the starting state, we ensure that the first state-action pair encoded a valid transition from the starting state. Additionally $l_2$ uses the output FNN to check the validity of each transition. Thus after $l_2$ the SSM computes the sequence $\bs{z}^2_1 \cdots \bs{z}^2_k$ with 
    $$\bs{z}^2_i=(\bs{e}_{Q}(q_i), \bs{e}_{Q}(q_{i-1}), \bs{e}_\Act(a), c^1_i, c^2_i, \text{check}(q_{t-1}, a, q_t, c^1_i, c^2_i)).$$
    $\text{check}(q_{t-1}, a, q_t, c^1_i, c^2_i)$ is the output of a pointwise applied FNN checking (1) if $(q_{t-1}, a, q_t) \in \delta$ and (2) if depending on the action, the counter states have a valid value. This verification is particularly crucial for $\mdec$ and $\mztest$ actions, where it must be ensured that the counter state is either not negative in case of $\mdec$ or is zero in case of $\mztest$. If the check is valid, the function will output zero and in case of an invalid transition it will be one.
    Consequently, layer $l_3$ applies a linear recurrence to accumulate values in the \textit{check} dimension. The output FNN then only has to check two things (1) whether \textit{check} has value zero, in which case each transition was valid or has a nonzero value, which leads to rejection of the input sequence and (2) if the last vector represents the accepting state $q_f$.
    This reduction establishes the undecidability of $\ssmSAT[\tiSSM \cap \diagSSM]$. The detailed proof can be found in the Appendix.
\end{proof}

Next, we consider the case when the SSM works over finite precision, but the precision depends logarithmically on the input length. This setting has recently been studied in terms of expressiveness \cite{merrillIllusionStateStatespace2024}. I.e.\ when evaluating an SSM on a word $w\in\Sigma^*$ with length $|w|$, the SSM works with $\mathcal{O}(\log(\max(|\Sigma|,|w|))$ precision.

\begin{theorem}
    \label{thm:undec_ssm_log}
    $\ssmSAT^{\text{fix}}_\text{log}[\tiSSM \cap \diagSSM]$ is undecidable.
\end{theorem}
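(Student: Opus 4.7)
The plan is to show that the reduction from $\MHALT$ developed in the proof of Theorem~\ref{thm:undec_ssm} already works when all operations are performed under $\log$-precision arithmetic, i.e.\ using $\mathcal{O}(\log n)$ bits on inputs of length $n$. Three sources of arithmetic in the constructed SSM $\mathcal{S}_M$ need to be examined: the counter accumulation inside layer $l_1$, the history-encoding recurrence of layer $l_2$ given by Lemma~\ref{lem:undec_prev}, and the constant-size FNN checks together with the accumulator used in layer $l_3$.

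First I would argue that all values which must be stored exactly are already polynomially bounded in $n$. Along any valid run $(q_1,a_1)\cdots(q_n,a_n)$, each Minsky-machine counter $c^j_i$ is modified by at most one per step, so $c^j_i \leq n$; the validity-check accumulator of $l_3$ is similarly bounded by $n$; and the state and action indicators take values in $\{0,1\}$. Hence $\lceil \log_2(n+1) \rceil + 1$ bits always suffice to represent them, which fits inside the $\mathcal{O}(\log n)$ budget. All additions, subtractions and FNN comparisons in $l_1$, the validity FNN attached to $l_2$, and $l_3$ therefore coincide with their exact real-valued counterparts, and the gadgets of Lemma~\ref{lem:fnn_gadgets} applied to these bounded integers yield their intended Boolean outputs.

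Next I would turn to the recurrence $h_t = \tfrac{1}{4}h_{t-1} + x_t$ used inside $l_2$. The proof sketch of Lemma~\ref{lem:undec_prev} already observes that this encoding works under a \emph{constant} bit-width of at least six: the recurrence places a zero bit between successive encoded input bits, and the retrieval FNN only thresholds the top constant-many fractional bits in order to recover $x_{t-1}$. By choosing the hidden constant in the $\log$-precision bound sufficiently large, the available precision exceeds this fixed threshold for every input length, so the previous-state extraction remains faithful throughout.

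The main obstacle one might anticipate is that the recurrence in $l_2$ formally encodes the entire input history and therefore seems to need precision growing linearly in $n$; this worry is dissolved by the observation above, since only the top two fractional bits are ever inspected and the remaining tail may be truncated without affecting correctness. Combining the three points, the SSM $\mathcal{S}_M$ evaluated under $\log$-precision accepts exactly the encodings of accepting runs of $M$, so the reduction from $\MHALT$ of Theorem~\ref{thm:undec_ssm} transfers verbatim to the $\log$-precision setting and yields undecidability of $\ssmSAT^{\text{fix}}_\text{log}[\tiSSM \cap \diagSSM]$.
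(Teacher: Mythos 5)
Your proposal is correct and follows essentially the same route as the paper: it reuses the reduction from Theorem~\ref{thm:undec_ssm} and observes that every quantity that must be stored exactly (the counters and the validity accumulator) grows at most linearly with the input length and hence fits in $\mathcal{O}(\log n)$ bits, while the history recurrence of Lemma~\ref{lem:undec_prev} already operates at constant precision. Your write-up is in fact somewhat more explicit than the paper's, which compresses these observations into a short paragraph.
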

\begin{proof}
    This proof follows the same lines as the proof of Theorem \ref{thm:undec_ssm}. Since we are working with binary representations, the number of bits required to represent a number that grows linearly with the input length increases only logarithmically. Therefore, a fixed-precision SSM that operates with $\mathcal{O}(\log(\max(|\Sigma|,|w|)))$ precision suffices to simulate the computation accurately. This ensures that all values required to verify a run can still be represented precisely enough to preserve correctness, which establishes that the undecidability result still holds under logarithmically bounded fixed-width arithmetic.
\end{proof}

\section{Decidable Cases}
\label{sec:decidability}

In this section, we investigate restrictions to the satisfiability problem for SSM that yield decidability.
A natural restriction, frequently encountered in practical language models, is bounded context length. Let  
$\SSM$ be a class of SSM. The bounded context length satisfiability problem for SSM, denoted by 
$\ssmSAT^{\leq}_{\{\text{un}, \text{bin}\}}[\SSM]$, is as follows: given $\mathcal{S} \in \SSM$ and $n \in \nats$ (encoded in unary or binary), decide whether there 
exists a word $w$ with $|w| \leq n$ such that $\mathcal{S}(w) = 1$. 

The second restriction, also frequently encountered in practice, is the use of SSM in quantised environments, where computations are performed using fixed-width arithmetic, such as fixed- or floating-point representations. We consider two different settings for SSM over fixed-width arithmetic. The first setting assumes the bit-width $b$ to be
constant. For $b \in \nats$, define the decision problem $b$-$\ssmSAT^{\text{fix}}[\SSM]$: given $\mathcal{S} \in \SSM$, does there exist a word $w$ such that $\mathcal{S}(w) = 1$ when $\mathcal{S}$ is evaluated over fixed-width arithmetic with bit-width $b$? The second setting considers the bit-width as part of the input. We distinguish between the cases where the bit-width is given in \emph{unary} or \emph{binary}. The decision problem $\ssmSAT^{\text{fix}}_{\{\text{un}, \text{bin}\}}[\SSM]$ is: given $\mathcal{S} \in \SSM$ and $b\in\nats$ (encoded in unary or binary), decide whether there exists a word $w$ such that $\mathcal{S}(w) = 1$ when $\mathcal{S}$ is evaluated over fixed-width arithmetic with bit-width $b$.

\subsection{Upper Bounds}
We start the complexity analysis by establishing upper bounds for the mentioned problems. All of our upper bounds will hold for time-invariant as well as diagonal gated SSM. Therefore we will use $\SSM$ to refer to the class of $\tiSSM \cup \diagSSM$.

\begin{theorem}
    \label{thm:bc_ssm_np}
    $\ssmSAT_{\text{un}}^{\leq}[\SSM]$ is in \NP.
\end{theorem}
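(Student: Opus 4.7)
The plan is a direct application of the standard guess-and-check paradigm for \NP, leveraging the polynomial-evaluation property of SSM stated in the preliminaries. The essential point is that when the length bound $n$ is given in unary, the size of any candidate witness word is polynomially bounded in the size of the overall input.

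I would proceed in three steps. First, I would observe that because $n$ is encoded in unary, the input size is at least $n$, so any word $w \in \Sigma^*$ with $|w| \leq n$ can be written down using at most $n \cdot \lceil\log|\Sigma|\rceil$ bits, which is polynomial in $|\mathcal{S}| + n$. Second, I would describe the nondeterministic algorithm: on input $(\mathcal{S}, n)$, guess a length $k \in \{0, 1, \dots, n\}$ and guess a word $w = a_1 \cdots a_k \in \Sigma^k$ symbol by symbol. Third, I would invoke the polynomial-evaluation property of SSM: the value $\mathcal{S}(w)$ can be computed in time polynomial in $|\mathcal{S}| + |w|$, hence also polynomial in $|\mathcal{S}| + n$. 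The algorithm accepts iff the computed value equals $1$. Correctness is immediate: some computation branch accepts iff there is a word $w$ with $|w| \leq n$ and $\mathcal{S}(w) = 1$.

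There is essentially no obstacle here, since both the witness size and the verification cost are controlled by the unary encoding and the polynomial-evaluation property respectively. The only subtlety worth mentioning explicitly is that we must make sure the arithmetic performed during the simulation of $\mathcal{S}$ on $w$ does not blow up: this is already captured by the polynomial-evaluation property, which assumes a standard bit-cost model consistent with the finite representation of $\mathcal{S}$ assumed in the preliminaries. Note also that this argument is oblivious to whether $\mathcal{S}$ is time-invariant or diagonal gated, so the bound holds uniformly for $\SSM = \tiSSM \cup \diagSSM$.
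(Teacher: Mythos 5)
Your proposal is correct and follows essentially the same argument as the paper: nondeterministically guess a word of length at most $n$ (polynomial in the input thanks to the unary encoding) and verify acceptance in polynomial time via the polynomial-evaluation property. The extra detail you give on witness size and the bit-cost model is a welcome elaboration but does not change the route.
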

\begin{proof}
    We first show membership in \NP\ using a standard guess-and-check approach. Given an SSM $\mathcal{S} \in \SSM$ 
    and $n \in \nats$ (encoded in unary), we nondeterministically guess a word $w \in \Sigma^*$ of length at most 
    $n$ and check whether $\mathcal{S}(w) = 1$. By the polynomial-evaluation property, this verification can be 
    performed in polynomial time.

\end{proof}

\begin{theorem}
    \label{thm:bc_bin_nexptime}
    $\ssmSAT_{\text{bin}}^{\leq}[\SSM]$ is in \NEXPTIME.
\end{theorem}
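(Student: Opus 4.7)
The plan is to run exactly the same nondeterministic guess-and-check as in Theorem~\ref{thm:bc_ssm_np}, but scaled up by one exponential to account for the succinct encoding of $n$. Given an instance $(\mathcal{S}, n)$ with $n$ written in binary, let $m := |\mathcal{S}| + \lceil \log_2(n+1)\rceil$ denote the size of the input. Then $n \le 2^m$, and any candidate witness $w$ with $|w|\le n$ has bit-length at most $n\cdot\lceil\log_2 |\Sigma|\rceil \in 2^{O(m)}$. So a witness, when it exists, is merely of exponential size and can in principle be guessed in nondeterministic exponential time.

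Concretely, the algorithm would first nondeterministically guess a word $w\in\Sigma^{\le n}$, writing its symbols one after another onto the work tape; this takes $2^{O(m)}$ nondeterministic steps. Then, using the polynomial-evaluation property stated in the preliminaries, the value $\mathcal{S}(w)$ is computed deterministically in time polynomial in $|\mathcal{S}| + |w|$, which is again $2^{O(m)}$. The machine accepts iff the resulting output equals $1$. Since both phases are bounded by an exponential in the input size, the overall procedure runs in nondeterministic exponential time, placing $\ssmSAT_{\text{bin}}^{\leq}[\SSM]$ in \NEXPTIME.

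The only point that requires a moment of thought — and the closest thing to an obstacle — is checking that the polynomial-evaluation property can still be invoked when the candidate word itself is of exponential size. This is fine: the property guarantees evaluation time polynomial in $|\mathcal{S}|+|w|$, and a polynomial in an exponential quantity is still exponential, which is exactly what the \NEXPTIME budget permits. Unlike the fixed-width setting treated later in the section, no separate exponential-model property is needed here, since the length bound $n$ is part of the input and no further reasoning about the existence of short witnesses is required.
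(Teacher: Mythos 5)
Your proposal is correct and follows essentially the same guess-and-check argument as the paper: nondeterministically guess a word of length at most $n \le 2^{O(m)}$ and evaluate it via the polynomial-evaluation property, with both phases fitting in the exponential time budget. Your write-up is in fact slightly more careful than the paper's in making explicit that a polynomial of an exponentially long witness is still exponential in the input size.
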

\begin{proof}
    Let $\mathcal{S}\in\SSM$ be an SSM and $n$ be the binary encoded word length. Hence, the maximum word length to check is exponential in $n$. This naturally gives us in $\NEXPTIME$ algorithm to decide $\ssmSAT_{\text{bin}}^{\leq}[\SSM]$ by just guessing a word $w$ with $|w|\leq 2^n$ and then checking if $\mathcal{S}(w)=1$. Using the polynomial-evaluation property $\mathcal{S}(w)$ can be computed in time polynomial in $|\mathcal{S}|+|w|= |\mathcal{S}|+2^n$.
\end{proof}

Turning our attention to the fixed-width arithmetic restriction, we first show that SSM operating over fixed-width arithmetic possess an \emph{exponential-model property}: if an SSM $\mathcal{S}$ accepts some word $w$ when using $b$-bit fixed-width arithmetic, then $\mathcal{S}$ also accepts a word $w'$ with $|w'| \leq 2^{\poly(|\mathcal{S}| + b)}$. Consequently, restricting SSM to fixed-width arithmetic has a similar effect to bounding the context length.

\begin{lemma}
    \label{lem:ssm_exp_model}
    Let $\mathcal{S} \in \SSM$ be an SSM with $L$ layers and $d$ dimensions, operating over fixed-point arithmetic with bit-width $b$. If $\mathcal{S}$ accepts a word $w$, then there exists a word $w'$ of length at most $2^{2 L d b}$ such that $\mathcal{S}$ accepts $w'$.
\end{lemma}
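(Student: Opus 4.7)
The plan is to exploit the fact that under $b$-bit fixed-width arithmetic the combined hidden-state tuple $\bs{H}_t := (\bs{h}_t^1, \ldots, \bs{h}_t^L)$ ranges over a finite set. Each of the $L$ layers contributes a $d$-dimensional hidden state whose coordinates are $b$-bit numbers, so $\bs{H}_t$ takes at most $(2^b)^{Ld} = 2^{Lbd} \leq 2^{2Lbd}$ distinct values. The SSM under fixed-width arithmetic thus behaves like a finite-state system on this state space, and a standard pumping argument will yield the required bound.

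First I would establish a determinism observation: the successor state $\bs{H}_t$ is a deterministic function of $\bs{H}_{t-1}$ and the current symbol $a_t$ alone. This follows by an inner induction on layers at a fixed time step: the layer-$1$ input $\bs{x}_t^1 = \emb(a_t)$ is determined by $a_t$; given $\bs{h}_{t-1}^1$, the SSM recurrence fixes $\bs{h}_t^1$ and hence $\bs{z}_t^1 = \phi(\bs{h}_t^1, \bs{x}_t^1)$, which is the input to layer $2$; iterating this up to layer $L$ determines every component of $\bs{H}_t$. The acceptance of a word of length $n$ depends only on $\bs{H}_n$ together with $a_n$, because $\out$ is applied to $\bs{z}_n^L$, which can be recovered from these data by the same inner induction.

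Second, I would apply the pumping argument itself. Suppose $\mathcal{S}(w) = 1$ for $w = a_1 \cdots a_n$ with $n > 2^{Lbd}$. By the pigeonhole principle there exist positions $0 \le i < j \le n$ with $\bs{H}_i = \bs{H}_j$. Form $w' := a_1 \cdots a_i a_{j+1} \cdots a_n$. A straightforward induction along the suffix $a_{j+1} \cdots a_n$, using the determinism observation, shows that the combined hidden state reached by $\mathcal{S}$ on $w'$ at its last position equals $\bs{H}_n$, hence $\mathcal{S}(w') = \mathcal{S}(w) = 1$. Iterating the excision until no two positions carry the same combined state yields a word of length at most $2^{Lbd} \leq 2^{2Lbd}$, as required.

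The main obstacle I anticipate is being careful about the multi-layer dependency when excising the middle segment: one has to verify that the inputs fed into layers $2, \ldots, L$ after the cut are genuinely the same as in the original run from position $j$ onward, rather than drifting because the surrounding context has changed. Since inter-layer communication at time $t$ depends only on $\bs{H}_t$ and $a_t$, this is precisely what the determinism observation guarantees, but it is the step that must be spelled out carefully to make the pumping induction go through cleanly.
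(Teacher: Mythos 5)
Your proposal is correct and follows essentially the same route as the paper's proof: pigeonhole on the combined per-layer hidden-state tuple under $b$-bit arithmetic, excise the segment between two repeated states, and iterate. Your explicit determinism observation (that $\bs{H}_t$ is a function of $\bs{H}_{t-1}$ and $a_t$ alone, by induction over layers within a time step) is exactly the justification the paper's proof implicitly relies on, and your count even yields the slightly tighter bound $2^{Ldb} \leq 2^{2Ldb}$.
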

\begin{proof}
    Consider $\mathcal{S}$ with $L$ layers and $d$ dimensions, using $b$ bits of fixed-point precision. For each position $1 \leq i \leq n$ and each layer $0 \leq l \leq L$, $\mathcal{S}$ computes vectors $\bs{h}_i^l$. Each $\bs{h}_i^l$ depends only on $\bs{h}_{i-1}^l$ and the input $\bs{x}_i^{l-1}$, and since each entry is represented with $b$ bits, there are at most $2^{d b}$ possible distinct vectors per layer. Across all $L$ layers, the total number of possible combined layer states is at most $2^{L d b}$.

    Now, along the computation on a word of length $n > 2^{2 L d p}$, by the pigeonhole principle, there must exist indices $1 \leq i < j \leq n$ such that, for all $1 \leq l \leq L$, $\bs{h}_i^l = \bs{h}_j^l$ and $\bs{h}_i^{l-1} = \bs{h}_j^{l-1}$. That is, the complete vector state across all layers at positions $i$ and $j$ coincides. Therefore, the computation from position $j+1$ onward is identical to the computation from $i+1$ onward. Thus, the word obtained by removing the segment $a_{i+1} \cdots a_j$ from $w$ is also accepted by $\mathcal{S}$. Iterating this argument, we obtain an accepted word $w'$ of length at most $2^{2 L d p}$.
\end{proof}

We will use this exponential-model property in order to prove membership in \PSPACE for the cases where the bit-width is constant or part of the input and encoded in unary.

\begin{theorem}
    \label{thm:fix_const_mem_pspace}
    $b$-$\ssmSAT^{\text{fix}}[\SSM]$ and $\ssmSAT^{\text{fix}}_{\text{un}}[\SSM]$ are in \PSPACE.
\end{theorem}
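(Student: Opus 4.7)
The plan is to design a nondeterministic polynomial-space algorithm and appeal to Savitch's theorem. Lemma~\ref{lem:ssm_exp_model} already bounds the search: for $\mathcal{S}\in\SSM$ with $L$ layers and $d$ dimensions operating over $b$-bit fixed-width arithmetic, if some word is accepted then one of length at most $N:=2^{2Ldb}$ is, and a binary counter ranging up to $N$ fits in $2Ldb$ bits, which is polynomial in the input both when $b$ is a fixed constant and when $b$ is given as part of the input in unary.

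The central observation that enables a streaming simulation is that, although layer $l_j$ logically receives the full output sequence of $l_{j-1}$, each layer's recurrence is strictly left-to-right in time: $\bs{h}^j_t$ depends only on $\bs{h}^j_{t-1}$ and $\bs{x}^j_t = \bs{z}^{j-1}_t$, which in turn depends only on $\bs{h}^{j-1}_t$ and $\bs{z}^{j-2}_t$. Hence at any time $t$ it suffices to retain the tuple $(\bs{h}^1_t,\ldots,\bs{h}^L_t)$ of current hidden states, occupying $L \cdot d \cdot b$ bits, together with the counter $t$. Advancing one step on a guessed symbol $a\in\Sigma$ proceeds layer by layer: set $\bs{x}^0 := \emb(a)$ and, for $j = 1,\ldots,L$, overwrite $\bs{h}^j_t$ with $\bs{h}^j_{t+1} := \gate^j(\bs{x}^{j-1})\cdot\bs{h}^j_t + \inc^j(\bs{x}^{j-1})$, then set $\bs{x}^j := \phi^j(\bs{h}^j_{t+1},\bs{x}^{j-1})$ and discard $\bs{x}^{j-1}$.

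The nondeterministic algorithm initialises the hidden states to $\bs{h}^j_0$ and $t$ to $0$, then at each iteration either halts (computing $\out(\bs{x}^L)$ from the currently stored state and accepting iff the output equals $1$) or guesses the next symbol, performs the update above, and increments $t$; it rejects if $t$ ever exceeds $N$. The matrix-vector product $\gate^j(\bs{x}^{j-1})\cdot\bs{h}^j_t$ can be computed entry by entry in an accumulator of $O(db)$ bits without materialising the full matrix, so each transition uses only $O(db)$ scratch space beyond the stored state, and every arithmetic operation is on $b$-bit numbers. The total working space is therefore $O(Ldb)$, polynomial in the input in both variants. Correctness is immediate from Lemma~\ref{lem:ssm_exp_model}, and Savitch's theorem turns the resulting NPSPACE procedure into the claimed PSPACE one.

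The main pitfall to avoid is the naive layer-at-a-time simulation, which would force us to materialise the exponentially long intermediate sequence produced by $l_j$ before feeding it to $l_{j+1}$ and thereby blow the space budget. The streaming-in-time view above sidesteps this and is what allows the $L$ layers to be folded into a single left-to-right pass that keeps only polynomial state on the work tape.
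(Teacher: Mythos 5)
Your proposal is correct and follows essentially the same route as the paper's proof: invoke the exponential-model property (Lemma~\ref{lem:ssm_exp_model}) to bound the word length, guess the word symbol by symbol while storing only the current hidden states $\bs{h}^1_t,\ldots,\bs{h}^L_t$ and a length counter in $O(Ldb)$ bits, and conclude via Savitch's theorem. Your explicit discussion of why the streaming, time-first simulation avoids materialising the intermediate sequences is a welcome clarification of a point the paper leaves implicit, but it is not a different argument.
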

\begin{proof}
    The computation of each next vector $\bs{h}_i^l$ in an SSM layer depends only on the previous state $\bs{h}_{i-1}^l$ and the embedding of the current symbol. By Lemma~\ref{lem:ssm_exp_model}, it suffices to consider words of length up to $2^{2Ldb}$, where $L$ is the number of layers, $d$ the dimension, and $b$ the bit-width.

    We describe a nondeterministic algorithm using polynomial space. According to Savitch's Theorem 
    \cite{SAVITCH1970177} this suffices to prove membership in \PSPACE. 
    Maintain a counter for the current word length (requiring $2Ldb$ bits), and nondeterministically generate a word $a_1 \cdots a_k \in \Sigma^*$ with $k \leq 2^{2Ldb}$, symbol by symbol. After each symbol $a_i$ is chosen, compute $\bs{h}_i^l$ from $\bs{h}_{i-1}^l$ and $\emb(a_i)$ for all $1 \leq l \leq L$, discarding $\bs{h}_{i-1}^l$ as it is no longer needed. Each step requires only polynomial time and space in $|\mathcal{S}|$. Storing all $\bs{h}_{i-1}^l$ for all layers requires $L d b$ bits. When $b$ is constant, the total space is $\mathcal{O}(|\mathcal{S}|^2)$. Thus, $b$-$\ssmSAT^{\text{fix}}[\SSM]$ is in \PSPACE. In the case of $b$ being part of the input and encoded in binary the total space is $\mathcal{O}((|\mathcal{S}|+b)^3)$. Thus, $\ssmSAT^{\text{fix}}_{\text{un}}[\SSM]$ is in \PSPACE.
\end{proof}

For binary-encoded bit-width, the problem lies in \EXPSPACE, due to the succinctness of the encoding.

\begin{theorem}
    \label{thm:fix_bin_expspace}
    $\ssmSAT^{\text{fix}}_{\text{bin}}[\SSM]$ is in \EXPSPACE.
\end{theorem}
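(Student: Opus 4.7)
The plan is to mirror the argument of Theorem~\ref{thm:fix_const_mem_pspace}, but account for the fact that when $b$ is encoded in binary the bit-width itself may be exponential in the input size $|\mathcal{S}| + \log b$. First I would invoke Lemma~\ref{lem:ssm_exp_model} to reduce to searching for a word of length at most $2^{2Ldb}$, where $L$ is the number of layers and $d$ the dimension of $\mathcal{S}$. Since $b$ may be of magnitude $2^{\poly(|\mathcal{S}| + \log b)}$, this length bound is doubly exponential in the input size, but crucially the counter measuring progress along the word only needs $2Ldb$ bits, which is singly exponential.

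Next I would describe a nondeterministic algorithm that guesses the input word one symbol at a time. At each step it stores only the current tuple of layer states $(\bs{h}_i^1, \dotsc, \bs{h}_i^L)$, each a $d$-dimensional vector whose entries are represented in $b$ bits, plus the word-length counter. The total working space is $\mathcal{O}(L d b)$ plus the space needed to evaluate one layer transition in fixed-width arithmetic with bit-width $b$, which by the polynomial-evaluation property is polynomial in $|\mathcal{S}| + b$. Since $b$ is at most exponential in the input size, this whole budget is $2^{\poly(|\mathcal{S}| + \log b)}$, i.e.\ exponential space. After each guessed symbol the algorithm overwrites the previous layer states with the newly computed ones, increments the counter, and accepts as soon as $\out$ applied to the current $\bs{z}_i^L$ yields $1$; it rejects once the counter exceeds $2^{2Ldb}$.

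Finally I would close the argument by appealing to Savitch's Theorem \cite{SAVITCH1970177}, which gives $\mathrm{NEXPSPACE} = \mathrm{EXPSPACE}$, so the nondeterministic procedure above witnesses membership in \EXPSPACE. I do not expect any serious obstacle: the only subtlety is bookkeeping, namely checking that both the counter (of length $2Ldb$) and the layer-state storage (of size $L d b$) remain at most exponential when $b$ is binary-encoded, and that a single fixed-width layer update can be simulated in this same space bound using the assumed polynomial representation of $\gate$, $\inc$, $\phi$ and $\out$.
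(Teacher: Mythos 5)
Your proposal is correct and follows essentially the same route as the paper: both reduce to words of length at most $2^{2Ldb}$ via Lemma~\ref{lem:ssm_exp_model}, run the symbol-by-symbol nondeterministic guess-and-simulate procedure from Theorem~\ref{thm:fix_const_mem_pspace} storing only the current layer states and a length counter, observe that the space $\mathcal{O}(Ldb)$ becomes exponential in the input size because the binary-encoded $b$ can be exponential in its encoding length, and close with Savitch's Theorem. The bookkeeping details you flag are exactly the ones the paper's (terser) proof relies on, and they check out.
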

\begin{proof}
    The proof is analogous to Theorem~\ref{thm:fix_const_mem_pspace}, except that with binary encoding of the bit-width, the space required to store an intermediate SSM computation is $L d 2^{|b|}$. This yields an \EXPSPACE\ algorithm for the problem.
\end{proof}

\subsection{Lower Bounds}
    Having established upper bounds for all of the problems, we now aim to show lower bounds. While the upper bounds did not depend on the class of gates the SSM used, we will see that this is not the case when it comes to lower bounds. 
\begin{theorem}
    \label{thm:bc_ssm}
    $\ssmSAT^{\leq}_{\text{un}}[\tiSSM \cap \diagSSM]$ is NP-hard.
\end{theorem}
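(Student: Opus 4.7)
The plan is to give a polynomial-time reduction from \textsc{Sat} to $\ssmSAT^{\leq}_{\text{un}}[\tiSSM \cap \diagSSM]$. Given a CNF formula $\varphi$ over variables $x_1,\dots,x_n$ and clauses $C_1,\dots,C_m$, I would construct an SSM $\mathcal{S}_\varphi$ over the alphabet $\Sigma = \{1,\dots,n\}\times\{0,1\}$ together with the length bound $n$ encoded in unary. A word over $\Sigma$ is intended to encode an assignment: the symbol $(i,b)$ sets $x_i := b$. The key point that makes the construction fit into $\tiSSM\cap\diagSSM$ is that I only need a \emph{constant diagonal} gate, which I take to be the identity $I_d$; then the recurrence becomes a pure accumulator $\bs{h}_t = \bs{h}_{t-1} + \inc(\bs{x}_t)$, and all the interesting work is pushed into the embedding and the output FNN.

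Concretely, $\mathcal{S}_\varphi$ has one layer with dimension $d = n+m$, initial state $\bs{h}_0 = \bs{0}$, $\gate(\bs{x}) = I_d$, $\inc(\bs{x}) = \bs{x}$ and $\phi(\bs{h},\bs{x}) = \bs{h}$. The first $n$ coordinates count how often each variable has been assigned and the last $m$ coordinates count how many literals of each clause have so far been satisfied. I set
\[
\emb((i,b)) \;=\; \bs{e}_i \;+\; \sum_{j \in J_{i,b}} \bs{e}_{n+j},
\]
where $J_{i,b} \subseteq [m]$ is the set of clauses satisfied by the literal ``$x_i = b$''. The output FNN $\out$ is built from the gadgets of Lemma~\ref{lem:fnn_gadgets}: I use $N_{=1}$ on each of the first $n$ coordinates, a threshold gadget (e.g.\ $\relu(\relu(x) - \relu(x-1))$) expressing ``$\geq 1$'' on each of the last $m$ coordinates, and combine everything with $N_\land$, so that $\out(\bs{h}_k) = 1$ iff every variable coordinate equals exactly $1$ and every clause coordinate is at least $1$.

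For correctness, suppose $\mathcal{S}_\varphi(w) = 1$ for some $w$ of length $\leq n$. The variable checks force each index $i\in[n]$ to occur exactly once in $w$, so $|w| = n$ and $w$ determines a total assignment $\alpha$; the clause checks then guarantee that $\alpha \models C_j$ for every $j$. Conversely, any $\alpha \models \varphi$ yields the accepted word $(1,\alpha(x_1))\cdots(n,\alpha(x_n))$ of length exactly $n$. The reduction is clearly polynomial: $|\Sigma| = 2n$, $d = n+m$, the output FNN has size $O(n+m)$, and the bound $n$ in unary contributes $n$ characters.

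The main obstacle is the restriction to $\tiSSM \cap \diagSSM$, which forbids both input-dependent gates and non-diagonal mixing, so the hidden state cannot be used to ``route'' information between coordinates. Choosing $\gate = I_d$ sidesteps this by reducing the recurrence to a per-coordinate sum, after which the positional role of variables and the clause bookkeeping can be handled purely at the embedding level, and the global acceptance condition purely by $\out$. Everything else — the polynomial-evaluation property and the $N_\land,N_{=b}$ gadgets — is already supplied by the preliminaries, so no further technical ingredients are needed.
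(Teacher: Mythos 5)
Your proof is correct, and it follows essentially the same strategy as the paper: a single-layer SSM with constant identity gate (hence in $\tiSSM\cap\diagSSM$), a one-hot index alphabet, an accumulator recurrence, a duplicated counting block enforcing that each index occurs exactly/at most once, and all verification pushed into the output FNN via the gadgets of Lemma~\ref{lem:fnn_gadgets}. The only real difference is the source problem: the paper reduces from Zero-One Integer Programming ($A\bs{v}=\bs{b}$), whereas you reduce from CNF-\textsc{Sat}; both are equally valid here, and your ``$\geq 1$'' gadget $\relu(\relu(x)-\relu(x-1))$ is a correct addition to the lemma's toolkit. One minor remark: the paper reuses its ZOIP construction verbatim for Theorem~\ref{thm:fix_un_ti_np_hard}, where it must argue that all intermediate values fit in polynomially many bits; your SAT-based variant would serve there at least as well, since every accumulated entry is bounded by $n$.
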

\begin{proof}
    We reduce from the Zero-One Integer Programming Problem, which is known to be NP-complete \cite{karpReducibilityCombinatorialProblems1972}. Given an integer matrix $A \in \nats^{d\times d}$ and a vector $\bs{b} \in \nats^d$ (with numbers encoded in binary), the problem asks whether there exists $\bs{v} \in \{0,1\}^d$ such that $A\bs{v} = \bs{b}$.

    For a vector $\bs{v} \in \{0,1\}^d$, let $\mathrm{supp}(\bs{v}) = \{i \in \{1, \dots, d\} \mid v_i \neq 0\}$ denote 
    its support. Then $\bs{v}$ can be written as a sum of standard base vectors: 
    $\bs{v} = \sum_{i \in \mathrm{supp}(\bs{v})} \bs{e}_i$, so $A\bs{v} = \sum_{i \in \mathrm{supp}(\bs{v})} A\bs{e}_i$.

    We construct an SSM $\mathcal{S}_{A, \bs{b}}$ over the alphabet $\Sigma = \{1, \dots, d\}$, with dimension $2d$ 
    and embedding function $\emb(i) = (\bs{e}_i, \bs{0}_d)$. The input word encodes the support of $\bs{v}$ as a 
    sequence of indices which are mapped to the corresponding basis vectors. To ensure that each dimension appears 
    at most once, we duplicate the vector: the first $d$ components accumulate $A\bs{v}$, while the second $d$ 
    components count occurrences of each index. We define a single layer with recurrence $\bs{h}_0 = \bs{0}$ and
    $$
    \bs{h}_t = I \cdot \bs{h}_{t-1} + \begin{pmatrix}
    A & 0 \\
    I_d & 0
    \end{pmatrix} \bs{x}_t,
    $$
    so that
    $$
    \bs{h}_t = \left(\sum_{1 \leq i \leq t} A\bs{x}_i,\; \sum_{1 \leq i \leq t} \bs{x}_i\right).
    $$
    The output FNN $N_{\bs{b}}$ then verifies that the first $d$ entries equal $\bs{b}$ and each entry in the second 
    $d$ components is at most $1$, ensuring a valid encoding of $\bs{v} \in \{0,1\}^d$. Using the gadgets from 
    Lemma~\ref{lem:fnn_gadgets}, $N_{\bs{b}}$ can be defined as
    $$
        N_{\land} \circ \left( N_{=b_1} \| \cdots \| N_{=b_d} \| N_{\leq 1} \| \cdots \| N_{\leq 1} \right)
    $$

The resulting SSM $\mathcal{S}_{A, \bs{b}}$ has only polynomial size $A, \bs{b}$ yielding a polynomial time reduction. The constructed SSM is both time-invariant and diagonal. Thus, $\ssmSAT^{\leq}_{\text{un}}$ is NP-complete for both time-invariant and diagonal SSM.
\end{proof}

As a consequence of Theorem \ref{thm:bc_ssm} and Theorem \ref{thm:bc_ssm_np} we get the following result.
\begin{corollary}
    \label{cor:bc_un_np_compl}
    $\ssmSAT^{\leq}_{\text{un}}[\tiSSM]$, $\ssmSAT^{\leq}_{\text{un}}[\diagSSM]$ are NP-complete.
\end{corollary}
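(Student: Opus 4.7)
The plan is to assemble the corollary directly from the two preceding results without any new construction. The membership direction comes from Theorem~\ref{thm:bc_ssm_np}: that theorem asserts $\ssmSAT_{\text{un}}^{\leq}[\SSM]\in\NP$ where $\SSM$ is used to denote $\tiSSM\cup\diagSSM$. Since both $\tiSSM$ and $\diagSSM$ are subclasses of this union, any instance of $\ssmSAT_{\text{un}}^{\leq}[\tiSSM]$ or $\ssmSAT_{\text{un}}^{\leq}[\diagSSM]$ is in particular an instance of $\ssmSAT_{\text{un}}^{\leq}[\tiSSM\cup\diagSSM]$, and the guess-and-check \NP procedure applies verbatim.

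For the hardness direction I would invoke Theorem~\ref{thm:bc_ssm}, which gives \NP-hardness already for the intersection $\tiSSM\cap\diagSSM$. The key observation is that an \NP-hardness result for a smaller class lifts to any larger class: the reduction from the Zero-One Integer Programming Problem constructs an SSM $\mathcal{S}_{A,\bs{b}}$ that is simultaneously time-invariant and diagonal. Hence the very same reduction witnesses \NP-hardness of $\ssmSAT^{\leq}_{\text{un}}[\tiSSM]$ (treating the constructed SSM as a time-invariant one) and of $\ssmSAT^{\leq}_{\text{un}}[\diagSSM]$ (treating it as a diagonal one), because acceptance of a word by $\mathcal{S}_{A,\bs{b}}$ does not depend on which class label we attach.

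Combining these two directions for each of the two classes yields \NP-completeness of both $\ssmSAT^{\leq}_{\text{un}}[\tiSSM]$ and $\ssmSAT^{\leq}_{\text{un}}[\diagSSM]$. There is no real obstacle here; the only thing worth stating explicitly is the class-inclusion argument, namely that $\tiSSM\cap\diagSSM\subseteq \tiSSM,\diagSSM\subseteq\tiSSM\cup\diagSSM$, so that the lower bound from Theorem~\ref{thm:bc_ssm} and the upper bound from Theorem~\ref{thm:bc_ssm_np} sandwich the complexity of both intermediate classes, establishing the claim.
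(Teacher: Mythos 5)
Your proposal is correct and is exactly the argument the paper intends: membership follows from Theorem~\ref{thm:bc_ssm_np} since $\tiSSM,\diagSSM\subseteq\tiSSM\cup\diagSSM$, and hardness lifts from Theorem~\ref{thm:bc_ssm} since the constructed SSM lies in $\tiSSM\cap\diagSSM$. The paper states the corollary as an immediate consequence of these two theorems, so your explicit class-inclusion reasoning matches its (implicit) proof.
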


In order to prove \PSPACE lower bounds for some problems, the next lemma shows that we can reduce satisfiability of $\LTLf$ to \ssmSAT by constructing an SSM which is polynomial in the size of the formula and additionally only needs a constant bit-width to operate.

\begin{lemma}
    \label{lem:ltl_ssm_reduction}
    For any $\LTLf$-formula $\varphi$ over $\Sigma$, there exists an SSM $\mathcal{S}_\varphi \in\diagSSM$ such that for all $w=a_1\cdots a_n\in\Sigma^*$ holds: $w\models \varphi$ iff $\mathcal{S}_\varphi$ accepts $\overleftarrow{w}=a_n\cdots a_1$.
\end{lemma}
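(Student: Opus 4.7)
The plan is to exploit the standard duality between future $\LTL$ operators on $w$ and past operators on the reversed word $\overleftarrow{w}$: since an SSM processes its input left-to-right, feeding it $\overleftarrow{w}$ lets the hidden state naturally carry exactly the information about ``future positions in $w$'' referenced by $\LTLf$ operators. I would construct $\mathcal{S}_\varphi$ so that, after reading the prefix $a_n\cdots a_{n-t+1}$ of $\overleftarrow{w}$, it stores in a dedicated hidden dimension the value $S_\psi[t] := \llbracket\psi\rrbracket^w_{n-t+1}$ for every subformula $\psi$ of $\varphi$. Since $w \models \varphi$ iff $w,1 \models \varphi$, the final step $t = n$ then gives $S_\varphi[n] = \llbracket\varphi\rrbracket^w_1$, and the output FNN simply projects onto this dimension.

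The construction proceeds bottom-up over the subformula hierarchy, using one SSM layer per level of temporal nesting depth of $\varphi$. The embedding maps a set of true propositions to its multi-hot encoding, so atomic subformulas are available directly from the initial input. In each layer, already-computed subformula dimensions are propagated unchanged, while each fresh subformula is handled according to its top operator. Boolean connectives $\neg$ and $\lor$ are pointwise and fit inside the output function $\phi$ of the current layer (an FNN). The operator $\ltlnext\psi'$ is realised by the ``previous-bit'' gadget of Lemma~\ref{lem:undec_prev}, which is itself a diagonal construction with constant bit-width and recovers $S_{\psi'}[t-1]$ from a history encoded in a single scalar hidden state. The operator $\psi_1\until\psi_2$ is realised by the diagonal fixed-point recurrence
\[
   U_t \;=\; S_{\psi_2}[t] \;+\; S_{\psi_1}[t]\,(1 - S_{\psi_2}[t])\cdot U_{t-1}, \qquad U_0 = 0,
\]
which already has the SSM-update form $h_t = \gate(\bs{x}_t)\, h_{t-1} + \inc(\bs{x}_t)$ with $\gate(\bs{x}_t) = S_{\psi_1}[t](1 - S_{\psi_2}[t])$ and $\inc(\bs{x}_t) = S_{\psi_2}[t]$; both are FNN-computable from $\bs{x}_t$, because by bottom-up layering the operand values $S_{\psi_1}[t]$ and $S_{\psi_2}[t]$ are already present in the input of the current layer.

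Every recurrence used in this construction is diagonal, so $\mathcal{S}_\varphi \in \diagSSM$, and the only non-Boolean values ever stored are the history scalars of the $\ltlnext$ gadget, for which the six bits of precision supplied by Lemma~\ref{lem:undec_prev} suffice. The number of layers is bounded by the temporal depth of $\varphi$ and the dimension by the number of subformulas, so $|\mathcal{S}_\varphi|$ is polynomial in $|\varphi|$ and the whole reduction runs in polynomial time. Correctness then follows by a routine induction on the subformula hierarchy, using the recurrences above; in particular, the $\LTLf$-specific boundary semantics at step $t=1$ of $\overleftarrow{w}$ — where $\ltlnext\psi'$ must evaluate to false and $\psi_1\until\psi_2$ must collapse to $\psi_2$ — come out automatically from initialising all hidden states with $\bs{0}$.

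The main obstacle I anticipate is the $\ltlnext$ case: the diagonal-gate restriction forbids copying a value from one hidden dimension to another, so the naive idea of first storing $S_{\psi'}[t]$ in the hidden state and then shifting it by one step cannot be carried out inside a single \diagSSM\ layer. This is precisely the barrier that the history-encoding trick of Lemma~\ref{lem:undec_prev} — packing successive bits into disjoint blocks of the binary expansion of a single scalar — is designed to overcome, and invoking it is what allows the whole reduction to remain inside \diagSSM\ while keeping both the added dimension and the required bit-width constant per $\ltlnext$ subformula.
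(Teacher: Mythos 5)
Your proposal is correct and follows essentially the same route as the paper's proof: reverse the word, allocate one hidden dimension per subformula, evaluate bottom-up across layers, use the history-encoding gadget of Lemma~\ref{lem:undec_prev} for $\ltlnext$, and realise $\until$ via an input-dependent diagonal gate with constant bit-width. The only cosmetic differences are that you fold Boolean connectives into $\phi$ rather than giving them their own layers, and your $\until$ recurrence uses the gate $S_{\psi_1}[t](1-S_{\psi_2}[t])$ (realisable as $\relu(S_{\psi_1}[t]-S_{\psi_2}[t])$ on Boolean inputs) to keep values in $\{0,1\}$ instead of the paper's $\min(1,\cdot)$ clipping.
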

\begin{proof}
    
    Given a formula $\varphi$ over $\Sigma$, we construct an SSM $\mathcal{S}_\varphi$ over the same alphabet with $\mathcal{O}(|\varphi|)$ layers and dimensions, such that for all $w=a_1\cdots a_n\in\Sigma^*$, $w$ is a model of $\varphi$ if and only if $\mathcal{S}_\varphi$ accepts the word $\overleftarrow{w}=a_n\cdots a_1$. The reversal is due to the fact that while in $\LTLf$ one can specify properties about future positions, SSM can only transfer information about previous positions to the current state. By reversing the models the SSM can check a formula $\ltlnext \varphi$ by looking one step back and a formula $\varphi \until \psi$ by remembering that $\psi$ held in the past and $\varphi$ ever since.
    Moreover, $\mathcal{S}_\varphi$ requires only $6$ bits of precision, regardless of $\varphi$.

    We introduce the \emph{copy matrix} $C^{(i \rightarrow j)} \in \mathbb{R}^{d \times d}$, defined by $C^{(i \rightarrow j)} = \bs{e}_j \bs{e}_i^{\top}$. Multiplying $C^{(i \rightarrow j)}$ with a vector $\bs{x} \in \mathbb{R}^d$ yields a vector whose $j$-th entry is $x_i$ and all other entries are zero. The copy matrix is used to transfer values between specific dimensions in the SSM state vector, enabling the encoding of logical dependencies.

    Let $\varphi$ be an $\LTLf$-formula over the alphabet $\Sigma = 2^{\mathcal{P}}$, with $k$ subformulas and $k_{\ltlnext}$ $\ltlnext$-formulas. We construct 
    an SSM $\mathcal{S}_\varphi$ with $k+k_{\ltlnext}$ layers and $|\mathcal{P}| + k + 1$ dimensions. The embedding function 
    $\emb: \Sigma \to \mathbb{R}^{k+1}$ is defined as $\emb(a) = (\mathbf{e}_{\mathcal{P}}(a), \bs{0}_k, 1)$. 
    Since SSM layers are linear while the subformulas of $\varphi$ form a tree, 
    we fix a total order on $\varphi$'s subformulas $\varphi_1 < \cdots < \varphi_k = \varphi$ such that $i < j$ if 
    $\varphi_i$ is a subformula of $\varphi_j$. This order is used to construct the SSM layer-wise, ensuring that when 
    evaluating a subformula, all its dependencies have already been computed in lower layers. Each subformula, proposition 
    $a \in \mathcal{P}$, and the constant 1 correspond to a particular dimension in the vectors during the computation. 
    For easier notation we identify subformulas, propositions and the constant 1 with the corresponding dimension. 
    The layers $l_1, \ldots, l_k$, with $l_i \in \diagSSM$, are defined as follows:
    \begin{itemize}
        \item \emph{Case 1:} $\varphi_i = a$ for $a \in \mathcal{P}$. The value is read from dimension $a$ and copied to dimension $\varphi_i$:
        $$
        \bs{h}_t = 0 \cdot \bs{h}_{t-1} + (I + C^{(a \rightarrow \varphi_i)}) \cdot \bs{x}_t, \quad \phi(\bs{x}, \bs{h}) = \bs{h}.
        $$
        \item \emph{Case 2:} $\varphi_i = \neg \varphi_j$ with $j < i$. Compute $1$ minus the value at $\varphi_j$:
        $$
        \bs{h}_t = 0 \cdot \bs{h}_{t-1} + (I + C^{(1 \rightarrow \varphi_i)} - C^{(\varphi_j \rightarrow \varphi_i)}) \cdot \bs{x}_t, \quad \phi(\bs{x}, \bs{h}) = \bs{h}.
        $$
        \item \emph{Case 3:} $\varphi_i = \varphi_j \land \varphi_k$ with $j, k < i$. The conjunction is computed as $\max(0, \llbracket \varphi_j \rrbracket + \llbracket \varphi_k \rrbracket - 1)$:
        $$
        \bs{h}_t = 0 \cdot \bs{h}_{t-1} + (I + C^{(\varphi_j \rightarrow \varphi_i)} + C^{(\varphi_k \rightarrow \varphi_i)} - C^{(1 \rightarrow \varphi_i)}) \cdot \bs{x}_t,
        $$
        and $\phi(\bs{x}, \bs{h}) = \bs{h}'$, where $h'_j = \relu(h_j)$ if $j = \varphi_i$, and $h'_j = h_j$ otherwise.
        \item \emph{Case 4:} $\varphi_i = \ltlnext \varphi_j$ with $j < i$. In the reversed setting of the SSM, in order to evaluate $\ltlnext \varphi_j$ at some position we have to check the previous state vector. This is done in two consecutive layers. This first layer simply copies the evaluation of $\varphi_j$ to $\varphi_i$ at each position.
        $$
            \bs{h_t} = 0 \cdot \bs{h}_{t-1} + (I+C^{\varphi_j \rightarrow \varphi_i} \cdot \bs{x}_t) \quad \phi(\bs{x}, \bs{h})=\bs{h}
        $$
        The second layer uses Lemma~\ref{lem:undec_prev}, to construct a layer that overrides these values with the previous value. As stated in Lemma~\ref{lem:undec_prev} this requires only six bits of precision.
        \item \emph{Case 5:} $\varphi_i = \varphi_j \until \varphi_k$ with $j, k < i$. Unrolling the until operator we get:
        $$
        \varphi_j \until \varphi_k = (\varphi_j \land \ltlnext(\varphi_j \until \varphi_k)) \lor \varphi_k
        $$
        Because the SSM checks the formulas reversed, we check if since the last position where $\varphi_k$ was satisfied $\varphi_j$ held continuously. This check actually needs a non time-invariant diagonal gate, because the previous state of $\varphi_i$ has to be multiplied with the current state of $\varphi_j$ in order to check the reccurence. Additionally the current state of $\varphi_k$ is added to $\varphi_i$ to restart the recurrence as soon as $\varphi_k$ holds again. This can be encoded as
        $$
        \bs{h}_t = \mathrm{diag}(C^{(\varphi_j \rightarrow \varphi_i)} \bs{x}_t) \cdot \bs{h}_{t-1} + (I + C^{(\varphi_k \rightarrow \varphi_i)}) \cdot \bs{x}_t,
        $$
        and $\phi(\bs{x}, \bs{h}) = \bs{h}'$, where $h'_j = \min(1, h_j)$ if $j = \varphi_i$, and $h'_j = h_j$ otherwise. The function $min(1, x)$ is FNN-computable by $\relu(x)-\relu(-x)-\relu(x-1)$. The $\min$-operation is needed to scale down the resulting value to be either one or zero.
    \end{itemize}
    Let $w=a_1\cdots a_n\in\Sigma^*$ and $\bs{z}_1 \cdots \bs{z}_n$ the sequence of vectors after all layers defined above. Correctness of the reduction follows now from the fact that $w, i \models \varphi_j$ iff $(\bs{z}_{n-(i-1)})_{\varphi_j} = 1$.
    The output FNN simply has to check whether $\bs{z}_n$ at dimension $\varphi$ equals $1$.
    It remains to note that all matrix coefficients and intermediate values used in this construction can be represented using a constant number of bits, as well as intermediate numbers during the computation do not need more than six bits ensuring that the SSM operates even with constant bit-width. As stated above $\mathcal{S}_\varphi$ has polynomial size in $|\varphi|$.
\end{proof}

We note that in the previous lemma, evaluating the \textit{until}-operator required a time-dependent diagonal gate, due to the need to multiply the previous and next hidden state. Thus, the lower bounds based on the construction will only hold for the class \diagSSM. 
First, we look at the case of bounded context length SSM, where the length is given in binary. The previous lemma showed that we can find a polynomial time reduction from the satisfiability problem for $\LTLf$-formula to \ssmSAT. In order to make this reduction also work in the case of bounded context length we need an argument, that a satisfiable $\LTLf$-formula has a model of length at most exponential in the size of the formula.

\begin{lemma}[\cite{demriTemporalLogicsComputer2016,degiacomoLinearTemporalLogic2013}]
    \label{lem:ltl_small_model}
    For every satisfiable $\LTLf$ formula $\varphi$, there is a word $w$ with $|w|\leq |\varphi|\cdot 2^{|\varphi|}$ such that $w\models \varphi$.
\end{lemma}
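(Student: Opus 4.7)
The plan is to invoke the classical translation from $\LTLf$ to nondeterministic finite automata and then apply a shortest-accepting-path argument. For every $\LTLf$ formula $\varphi$ over $\Sigma = 2^{\mathcal{P}}$, a tableau/closure construction (as described by De Giacomo and Vardi~\cite{degiacomoLinearTemporalLogic2013} and worked out in detail by Demri et al.~\cite{demriTemporalLogicsComputer2016}) yields an NFA $\mathcal{A}_\varphi$ over $\Sigma$ with $L(\mathcal{A}_\varphi) = \{w \in \Sigma^* \mid w \models \varphi\}$ and at most $|\varphi| \cdot 2^{|\varphi|}$ states. The $2^{|\varphi|}$ factor comes from tracking which subset of the Fischer–Ladner closure $\Sub(\varphi)$ is currently asserted, while the linear factor comes from a position/obligation marker (e.g., tracking which until subformula's eventual witness is still outstanding, or flagging the last position of the trace).

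The second step is the standard small-model argument for NFA: if an NFA with $N$ states accepts any word, then it accepts one whose length is strictly less than $N$. This follows by taking any accepting run $q_0 q_1 \cdots q_k$ and, whenever $q_i = q_j$ for some $i < j$, excising the segment between positions $i$ and $j$ (both in the run and in the input word) to obtain a shorter accepting run on a shorter word; iterating until no state repeats yields a run of length at most $N$.

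Combining both steps: if $\varphi$ is satisfiable, then $L(\mathcal{A}_\varphi) \neq \emptyset$, so there is some $w \in L(\mathcal{A}_\varphi)$ with $|w| \leq |\varphi| \cdot 2^{|\varphi|}$, and by the correctness of the translation $w \models \varphi$.

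The only real obstacle here is the NFA construction itself, which is standard; one just needs to be careful to invoke a version whose state-count bound matches $|\varphi| \cdot 2^{|\varphi|}$ exactly rather than, say, $2^{\mathcal{O}(|\varphi|)}$. Both cited references supply such a construction, so we simply appeal to them rather than reprove the translation.
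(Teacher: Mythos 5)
The paper does not prove this lemma at all: it is imported verbatim from the cited references, so there is no in-paper argument to compare yours against. Your automata-theoretic route---compile $\varphi$ into an NFA $\mathcal{A}_\varphi$ with at most $|\varphi|\cdot 2^{|\varphi|}$ states and then excise repeated states from a shortest accepting run---is the standard proof of exactly this bound and is correct as far as it goes. Two remarks. First, the entire quantitative content of the lemma sits in the state bound of the translation, which you also delegate to the references; so in effect you have replaced one citation by another plus the (genuinely easy) pumping step. That is legitimate, but be aware that the published constructions are usually stated as $2^{\mathcal{O}(|\varphi|)}$, and one does have to check that the particular construction (e.g.\ the alternating-automaton-to-NFA powerset step in De Giacomo--Vardi, which gives $2^{|\varphi|+\mathcal{O}(1)}$ states) really lands under $|\varphi|\cdot 2^{|\varphi|}$. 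Second, for \emph{finite} traces there is a more elementary, automaton-free argument worth knowing: assign to each position of a model its \emph{atom} (the set of subformulas of $\varphi$ true there) and show, via the expansion law $\varphi_1\until\varphi_2 \equiv \varphi_2 \lor (\varphi_1\land\ltlnext(\varphi_1\until\varphi_2))$ and a downward induction on positions, that deleting the segment strictly between two positions carrying the same atom preserves modelhood; iterating gives a model of length at most $2^{|\varphi|}$. Unlike the infinite-trace case, no extra bookkeeping for outstanding eventualities is needed, since every satisfied until has a concrete witness in the finite suffix. Either route proves the lemma; yours matches the cited sources more closely.
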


Using this exponential-model property for $\LTLf$-formula we show a lower bound for diagonal gated SSM with bounded context length and with the length encoded in binary.

\begin{theorem}
    \label{thm:bc_bin_pspace_hard}
    $\ssmSAT^{\leq}_{\text{bin}}[\diagSSM]$ is \PSPACE-hard.
\end{theorem}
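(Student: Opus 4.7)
The plan is to obtain \PSPACE-hardness by reducing satisfiability of $\LTLf$ to $\ssmSAT^{\leq}_{\text{bin}}[\diagSSM]$ and invoking the known \PSPACE-completeness of $\LTLf$-satisfiability. The key observation is that Lemma~\ref{lem:ltl_ssm_reduction} already gives a polynomial-time reduction from $\LTLf$-SAT to the unbounded problem $\ssmSAT[\diagSSM]$, producing for every formula $\varphi$ a diagonal SSM $\mathcal{S}_\varphi$ of size polynomial in $|\varphi|$ with $w \models \varphi$ iff $\mathcal{S}_\varphi$ accepts $\overleftarrow{w}$. What remains is to bound the length of the witness so that it fits into the bounded-context-length setting, and to show that this bound can be written succinctly in binary.

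Concretely, given an $\LTLf$-formula $\varphi$, I would construct the pair $(\mathcal{S}_\varphi, n)$ where $n := |\varphi| \cdot 2^{|\varphi|}$, encoded in binary. Since $n$ has bit-length $O(|\varphi|)$ and $\mathcal{S}_\varphi$ has size polynomial in $|\varphi|$ (both from Lemma~\ref{lem:ltl_ssm_reduction}), this is a polynomial-time transformation. Correctness is argued in two directions. If $\varphi$ is satisfiable, Lemma~\ref{lem:ltl_small_model} supplies a model $w$ with $|w| \leq n$; then by Lemma~\ref{lem:ltl_ssm_reduction} the reversed word $\overleftarrow{w}$ (which has the same length) is accepted by $\mathcal{S}_\varphi$, so $(\mathcal{S}_\varphi, n)$ is a positive instance of $\ssmSAT^{\leq}_{\text{bin}}[\diagSSM]$. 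Conversely, if some word $w'$ of length at most $n$ is accepted by $\mathcal{S}_\varphi$, then by Lemma~\ref{lem:ltl_ssm_reduction} the reversal $\overleftarrow{w'}$ is a model of $\varphi$, witnessing satisfiability.

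Together with the \PSPACE-hardness of $\LTLf$-SAT, this polynomial-time many-one reduction transfers the lower bound to $\ssmSAT^{\leq}_{\text{bin}}[\diagSSM]$.

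I do not anticipate a serious obstacle: the construction of Lemma~\ref{lem:ltl_ssm_reduction} and the small-model property of Lemma~\ref{lem:ltl_small_model} do all the heavy lifting. The only subtle point is that $n$ must be encoded in \emph{binary}, since the exponential bound from Lemma~\ref{lem:ltl_small_model} would not fit in polynomial space under a unary encoding; this is precisely why the hardness result is stated for the binary-encoded variant and does not automatically transfer to $\ssmSAT^{\leq}_{\text{un}}[\diagSSM]$.
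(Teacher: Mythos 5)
Your proposal is correct and matches the paper's own proof: both use the reduction of Lemma~\ref{lem:ltl_ssm_reduction} together with the exponential-model property of Lemma~\ref{lem:ltl_small_model}, setting the context bound to $|\varphi|\cdot 2^{|\varphi|}$ whose binary encoding has only polynomially many bits. Your explicit two-direction correctness argument is slightly more detailed than the paper's, but the approach is the same.
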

\begin{proof}
    We use the reduction from Lemma \ref{lem:ltl_ssm_reduction}. Given an $\LTLf$-formula $\varphi$, we can construct an SSM $\mathcal{S}_\varphi\in\diagSSM$ which accepts exactly the reversed models of $\varphi$. Lemma \ref{lem:ltl_small_model} showed that if $\varphi$ is satisfiable, it has a model $w\models \varphi$ with $|w|\leq |\varphi| \cdot 2^{|\varphi|}$. Due to Lemma~\ref{lem:ltl_ssm_reduction} the same holds for $\mathcal{S}_\varphi$. Therefore, we can specify the context length as $\log(|\varphi| \cdot 2^{|\varphi|})= \log(|\varphi|)+|\varphi|$ yielding a polynomial time reduction from the satisfiability problem for $\LTLf$-formula to $\ssmSAT^{\leq}_{\text{bin}}[\diagSSM]$.
\end{proof}

Next, we consider the case of SSM working over fixed-width arithmetic. First, we show a tight lower bound for the case of the bit-width being constant. 

\begin{theorem}
    \label{thm:fix_const_compl_pspace}
    $b$-$\ssmSAT^{\text{fix}}[\diagSSM]$ (for $b\geq 6$), $\ssmSAT^{\text{fix}}_{\text{un}}[\diagSSM]$ and $\ssmSAT^{\text{fix}}_{\text{bin}}[\diagSSM]$ are \PSPACE-hard.
\end{theorem}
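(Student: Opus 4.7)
The plan is to reduce $\LTLf$ satisfiability, which is \PSPACE-complete, to each of the three fixed-width SSM satisfiability problems using the construction from Lemma~\ref{lem:ltl_ssm_reduction}. The crucial observation enabling this uniform reduction is that the SSM $\mathcal{S}_\varphi$ produced there has size polynomial in $|\varphi|$ \emph{and} operates correctly already with a constant bit-width of six, independent of $\varphi$.

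First I would invoke Lemma~\ref{lem:ltl_ssm_reduction} to map an arbitrary $\LTLf$-formula $\varphi$ over $\Sigma$ in polynomial time to $\mathcal{S}_\varphi \in \diagSSM$ such that $\mathcal{S}_\varphi$ accepts $\overleftarrow{w}$ iff $w \models \varphi$. Since reversal preserves satisfiability, the formula $\varphi$ is satisfiable iff $\mathcal{S}_\varphi$ accepts some word. Because all matrix coefficients, embeddings, and intermediate values in the construction fit into six bits, this equivalence continues to hold when $\mathcal{S}_\varphi$ is evaluated under $b$-bit fixed-width arithmetic for any $b \geq 6$.

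From this single reduction all three hardness claims follow almost immediately. For $b$-$\ssmSAT^{\text{fix}}[\diagSSM]$ with constant $b \geq 6$, the output of the reduction is just $\mathcal{S}_\varphi$. For $\ssmSAT^{\text{fix}}_{\text{un}}[\diagSSM]$ we additionally output the constant $6$ in unary, which is of constant size and thus preserves polynomial running time. For $\ssmSAT^{\text{fix}}_{\text{bin}}[\diagSSM]$ we output $6$ in binary. In each case the instance has polynomial size in $|\varphi|$ and, by correctness of Lemma~\ref{lem:ltl_ssm_reduction}, is a yes-instance iff $\varphi$ is satisfiable. Since $\LTLf$ satisfiability is \PSPACE-hard by the theorem of De Giacomo and Vardi cited earlier, \PSPACE-hardness of all three problems follows.

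The main conceptual obstacle has actually already been absorbed into Lemma~\ref{lem:ltl_ssm_reduction}, namely the careful choice of matrix coefficients (values in $\{-1,0,1\}$, $\relu$-clipping via $N_\wedge$-style gadgets, and the $\min(1,\cdot)$ rescaling in the \emph{until}-case) ensuring that no intermediate quantity ever exceeds the range representable in six bits. Once this invariant is in place, the proof of the theorem reduces to a short bookkeeping argument about how the bit-width parameter is supplied to the respective decision problem.
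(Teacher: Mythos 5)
Your proposal is correct and follows essentially the same route as the paper: both invoke the polynomial-time reduction of Lemma~\ref{lem:ltl_ssm_reduction} and rely on the fact that the constructed SSM $\mathcal{S}_\varphi$ needs only a constant bit-width of six, independent of $\varphi$, so that the same instance (with the bit-width supplied as a constant, in unary, or in binary) witnesses \PSPACE-hardness for all three problems. Your write-up is somewhat more explicit than the paper's about why the argument survives any $b \geq 6$ and about the bookkeeping for the encoded bit-width, but the underlying argument is identical.
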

\begin{proof}
    \PSPACE-hardness follows from the polynomial reduction from  $\LTLf$ to \ssmSAT in Lemma \ref{lem:ltl_ssm_reduction}. The resulting SSM only needed a constant bit-width of six, independently from the given formula. This of course also shows \PSPACE-hardness for $\ssmSAT^{\text{fix}}_{\text{un}}[\diagSSM]$ and $\ssmSAT^{\text{fix}}_{\text{bin}}[\diagSSM]$, because of the independence of the bit-width from the given formula.
\end{proof}

While this leaves an exponential complexity gap for $\ssmSAT^{\text{fix}}_{\text{bin}}[\diagSSM]$ as it is contained in $\EXPSPACE$ and $\PSPACE$-hard, it is a tight bound for the other two problems.

\begin{corollary}
    \label{cor:fix_diag_pspace_compl}
    $b$-$\ssmSAT^{\text{fix}}[\diagSSM]$ (for $b\geq 6$) and $\ssmSAT^{\text{fix}}_{\text{un}}[\diagSSM]$ are \PSPACE-complete.
\end{corollary}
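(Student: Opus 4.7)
The plan is essentially to splice together the two preceding results, observing that their settings align exactly. For the upper bound, I would invoke Theorem \ref{thm:fix_const_mem_pspace}, which states that $b$-$\ssmSAT^{\text{fix}}[\SSM]$ and $\ssmSAT^{\text{fix}}_{\text{un}}[\SSM]$ both lie in \PSPACE, where $\SSM = \tiSSM \cup \diagSSM$. Since $\diagSSM \subseteq \SSM$, any algorithm that decides the problem for the larger class in polynomial space also decides the restricted version over diagonal gated SSM in polynomial space. Hence membership in \PSPACE transfers directly to $b$-$\ssmSAT^{\text{fix}}[\diagSSM]$ and $\ssmSAT^{\text{fix}}_{\text{un}}[\diagSSM]$.

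For the matching lower bound, I would cite Theorem \ref{thm:fix_const_compl_pspace}, which establishes \PSPACE-hardness of both $b$-$\ssmSAT^{\text{fix}}[\diagSSM]$ (for $b \geq 6$) and $\ssmSAT^{\text{fix}}_{\text{un}}[\diagSSM]$ via the reduction from $\LTLf$-satisfiability in Lemma \ref{lem:ltl_ssm_reduction}. The crucial property of that reduction is that the SSM it produces lies in \diagSSM and requires only $6$ bits of precision irrespective of the input formula, so the hardness is inherited both when $b$ is a fixed constant at least $6$ and when $b$ is part of the input in unary encoding.

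Combining the two bounds yields \PSPACE-completeness for the two problems in question. The only thing worth spelling out is the compatibility of the parameter regimes: the upper bound of Theorem \ref{thm:fix_const_mem_pspace} permits $b$ to be a constant or to be supplied in unary, and the lower bound of Theorem \ref{thm:fix_const_compl_pspace} is obtained with $b$ a constant (which also gives hardness a fortiori when $b$ is given in unary). There is no genuine obstacle to overcome; the corollary is essentially bookkeeping, and the real technical content resides in the two theorems it combines.
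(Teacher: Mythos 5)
Your proposal is correct and matches the paper's intended argument exactly: the corollary follows by combining the \PSPACE\ upper bound of Theorem~\ref{thm:fix_const_mem_pspace} (which applies to $\SSM = \tiSSM \cup \diagSSM$ and hence to \diagSSM) with the \PSPACE-hardness of Theorem~\ref{thm:fix_const_compl_pspace}. Your remark on the compatibility of the parameter regimes (constant versus unary $b$) is the only substantive check needed, and you handle it correctly.
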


Establishing an $\EXPSPACE$ lower bound for $\ssmSAT^{\text{fix}}_{\text{bin}}[\diagSSM]$ does not seem to be an easy task. Normally succinct encodings of problems imply an exponential increase in complexity. In this case though, only a portion of the input—the bit-width—is succinctly encoded, whereas the syntactical description of the SSM is given in standard (non-succinct) representation. It remains an open problem if this is enough to establish $\EXPSPACE$-hardness or if the upper bound can be lowered by exploiting some inherent properties of the SSM. We have seen that the construction in Lemma~\ref{lem:ltl_ssm_reduction} for establishing the \PSPACE-hardness depended on the use of non time-invariant diagonal gates, due to the recursive nature of the \textit{until}-operator. For time-invariant SSM we will show $\NP$-hardness as a lower bound.

\begin{theorem}
    \label{thm:fix_un_ti_np_hard}
    $\ssmSAT^{\text{fix}}_{\text{un}}[\tiSSM]$ and $\ssmSAT^{\text{fix}}_{\text{bin}}[\tiSSM]$  are \NP-hard.
\end{theorem}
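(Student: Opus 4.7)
The plan is to reuse the polynomial-time reduction from Zero-One Integer Programming given in the proof of Theorem~\ref{thm:bc_ssm}, augmented with an appropriately chosen bit-width. Given an instance $(A, \bs{b}) \in \nats^{d \times d} \times \nats^d$ of 0-1 IP (with entries encoded in binary), I would output the same time-invariant SSM $\mathcal{S}_{A, \bs{b}}$ together with a bit-width $b := \lceil \log_2 (M+2) \rceil$, where $M = \max(\max_i \sum_j A_{ij}, \max_i b_i)$ is an upper bound on every value that can arise during a legitimate computation on a $0/1$-encoding. Since $M$ has polynomial bit-length in the input, $b$ is also polynomial, and can therefore be written down in unary or in binary in polynomial time. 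Hence a single construction handles both $\ssmSAT^{\text{fix}}_{\text{un}}[\tiSSM]$ and $\ssmSAT^{\text{fix}}_{\text{bin}}[\tiSSM]$.

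For correctness I would verify both directions. The forward direction is essentially a copy of the corresponding direction in Theorem~\ref{thm:bc_ssm}: for $\bs{v} \in \{0,1\}^d$ with $A\bs{v} = \bs{b}$, feed the word that enumerates the support of $\bs{v}$; all count components stay at $0$ or $1$ and all sum components stay strictly below $2^b$, so $b$-bit arithmetic coincides with exact arithmetic and the output FNN accepts. For the backward direction, suppose the SSM accepts some word $w$. Under the standard saturating fixed-width semantics, each count accumulator is non-decreasing and, once it exceeds $1$, stays strictly above $1$; hence the FNN's check that every count component is at most $1$ forces each alphabet symbol to occur at most once in $w$, so $w$ encodes a genuine $\bs{v} \in \{0,1\}^d$. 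The check that the sum components equal $\bs{b}$ then yields $A\bs{v} = \bs{b}$ in the integers, since the legitimate sum $A\bs{v}$ fits within $b$ bits.

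The main obstacle is ensuring that fixed-width arithmetic cannot create spurious accepting computations that do not correspond to a genuine $0/1$-solution. The chosen $b$ is strictly larger than any value occurring in a legitimate computation, which rules out overflow on the sum components, and saturating semantics prevent wrap-around aliasing on the count components; both features are standard for the floating- and fixed-point arithmetic considered throughout this paper. The remaining properties---time-invariance of $\mathcal{S}_{A,\bs{b}}$ and its polynomial size---follow directly from Theorem~\ref{thm:bc_ssm}, so the whole reduction is polynomial-time and establishes \NP-hardness of both problems simultaneously.
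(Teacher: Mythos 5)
Your proposal is correct and follows essentially the same route as the paper: reuse the Zero-One Integer Programming reduction from Theorem~\ref{thm:bc_ssm} and observe that a bit-width polynomial in the encoding of $A$ and $\bs{b}$ suffices, which can be written in unary or binary in polynomial time. You are in fact somewhat more careful than the paper's own proof, which does not explicitly argue that fixed-width arithmetic on arbitrarily long words cannot produce spurious accepting runs; your discussion of the count components under saturating semantics fills that gap.
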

\begin{proof}
    This follows by the same reduction as in Theorem~\ref{thm:bc_ssm} for the context length restriction. Given a matrix $A \in \nats^{d \times d}$ and vector $\bs{b} \in \nats^d$, we must show that the bit-width required for all intermediate computations of $A\bs{v}$ for some $\bs{v} \in \{0,1\}^d$ is polynomial in the size of $A$ and $\bs{b}$. Let $m$ be the largest entry in $A$. In the worst case, $\bs{v} = \bs{1}_d$, so the largest entry in $A\bs{v}$ is $d \cdot m$. Thus, all intermediate values can be represented with bit-width $\log(d) + \log(m)$, which is polynomial in the size of $A$ and $\bs{b}$. Thus, $\ssmSAT^{\text{fix}}_{\text{un}}[\tiSSM]$ is $\NP$-hard. The same obviously holds for $\ssmSAT^{\text{fix}}_{\text{bin}}[\tiSSM]$ as $\log(\log(d)+\log(m))$ is even smaller.
\end{proof}

This leaves the question why we could not close the complexity gap between time-invariant and diagonal SSM. Intuitively in time-invariant SSM, the calculation of a hidden state $\bs{h}_t=M\bs{h}_{t-1} + B(\bs{x}_t)$ is an affine linear transformation of $\bs{h}_{t-1}$, while in the case of diagonal SSM the transformation is non-linear, due to the multiplication of the input-dependent gate with the previous state. 

We have seen that this is necessary for our $\PSPACE$ lower bound. In the $\NP$ lower bound however, the hidden state computation only uses a linear transformation. It remains an open question if this non-linearity actually has an impact on the complexity and the \PSPACE, resp. \EXPSPACE upper bounds can be improved or if the lower bounds can be tightened despite the restriction to linear transformations.

\section{Outlook}

The satisfiability problem studied in this paper provides a natural foundation for the study on formal verification of State Space Models. Although abstract, it captures the core challenge underlying many verification tasks, such as proving or refuting safety properties. Its general formulation, detached from specific property types, ensures that undecidability and complexity results immediately extend to broader reasoning problems. Thus, satisfiability serves as a baseline for understanding the fundamental limits of verifying SSM and motivates future research into efficient and sound verification techniques. Establishing an $\EXPSPACE$ lower bound for $\ssmSAT^{\text{fix}}_{\text{bin}}[\diagSSM]$ remains an open problem. While the binary encoding of the bit-width introduces a form of succinctness, the SSM itself is not succinctly represented, leaving it unclear whether this suffices for $\EXPSPACE$-hardness or allows for a lower upper bound by exploiting structural properties of the model. 

Additionally, the complexity gap between time-invariant and diagonal SSMs has not been closed. A key distinction lies in the state update: time-invariant SSMs apply affine linear updates, whereas diagonal SSMs involve non-linear gating mechanisms. While such non-linearity is essential for our $\PSPACE$ lower bound, the $\NP$ lower bound relies solely on linear updates. It is an open question whether this non-linearity influences the overall complexity or whether tighter lower bounds can be achieved despite the linearity constraint. Future work should further investigate the role of non-linearity in complexity bounds and explore potential refinements of both upper and lower bounds for these SSM variants.

This paper demonstrates that sound and complete formal reasoning techniques for SSM face significant computational complexity challenges, even in settings with bounded context length and fixed-width arithmetic. Nevertheless, practical approaches such as SMT encodings of the verification problem or the development of incomplete or unsound verification procedures could offer promising pathways for ensuring desirable properties in SSM-based language models.

 

\bibliography{main}

\clearpage
\appendix
\section{Missing proofs from Section 4}

\begin{proof}[Proof of Lemma~\ref{lem:undec_prev}]
    
    Let $w\in\{0,1\}$ be some word $w=a_1 \cdots a_n$. For $0\leq k \leq n$ we define numbers $r_k\in\reals$ by $r_0=0$ and
$$
r_k = \frac{1}{4} r_{k-1} + a_k.
$$
We now show how an FNN can recover $a_{k-1}$ from $r_k$.

First observe that the binary encoding of $r_k$ has the following form:
    $$
    r_k = \texttt{$a_k$.0$a_{k-1}$0$a_{k-2}$0$\cdots$$a_{1}$}
    $$
    implying that one can distinguish four possible cases for $r_k$ depending on $a_k$ and $a_{k-1}$:
    \begin{enumerate}
        \item $a_k=0$, $a_{k-1}=0$: $(0.00)_2=0 \leq r_k\leq (0.00\overline{01})_2=1/12$
        \item $a_k=0$, $a_{k-1}=1$: $(0.01)_2 = 1/4 \leq r_k\leq (0.\overline{01})_2 = 1/3$
        \item $a_k=1$, $a_{k-1}=0$: $(1.00)_2 = 1 \leq r_k\leq (1.00\overline{01})_2 = 13/12$
        \item $a_k=1$, $a_{k-1}=1$: $(1.01)_2 = 5/4 \leq r_k\leq (1.\overline{01})_2 = 4/3$
    \end{enumerate}
    This approach remains valid under fixed-width arithmetic: by multiplying by $\frac{1}{4}$ instead of $\frac{1}{2}$, 
    the encoding shifts historical bits into every second binary position. This spacing mitigates rounding errors caused 
    by overflows, as each meaningful bit is buffered by a zero. For this reason there is no overlap in the intervals even 
    when numbers getting rounded up. This allows us to refine the intervals with borders which can be exactly represented 
    with only 6 bits avoiding any rounding problems. 

    \begin{enumerate}
        \item $a_k=0$, $a_{k-1}=0$: $(0.00)_2=0 \leq r_k\leq (0.001)_2=1/8$
        \item $a_k=0$, $a_{k-1}=1$: $(0.01)_2 = 1/4 \leq r_k\leq (0.1)_2 = 1/2$
        \item $a_k=1$, $a_{k-1}=0$: $(1.00)_2 = 1 \leq r_k\leq (1.001)_2 = 9/8$
        \item $a_k=1$, $a_{k-1}=1$: $(1.01)_2 = 5/4 \leq r_k\leq (1.1)_2 = 3/2$
    \end{enumerate}

    This mapping defines a piecewise-linear function which can be implemented exactly by an FNN. Thus, $l_\text{prev} \in \tiSSM \cap \diagSSM$ computes the following recurrence:
    $$
        h_t = \frac{1}{4} h_{t-1} + x_t
    $$
    The output function $\phi(x, h)$ then applies the corresponding piecewise-linear decision rule to extract the previous bit.
\end{proof}

\begin{proof}[Proof of Thm. \ref{thm:undec_ssm}]
    Let $M=(Q, q_0, q_f, \delta)$ be a Minsky machine. We define an SSM $\mathcal{S}_M$ with five dimensions and $d=2|Q|+9$ dimensions over the alphabet $\Sigma=\{(q', a) \mid \exists q \in Q. (q, a, q') \in \delta\}$. Let $\emb$ and $l_1$ be the embedding function and layer as defined in Lemma \ref{lem:undec_l1}. Hence, for a given state-action pair sequence $w=(q_1, a_1)\cdots (q_{n}, a_n)\in\Sigma^*$ we have $\bs{z}_1\cdots \bs{z}_n$ with
    $$
        \bs{z}_i = (\bs{e}_Q(q_i), \bs{e}_{|Q|}(q_i), \bs{e}_{\Act}(a_i) c^1_i, c^2_i)
    $$
    where $c^i_j$ are the accumulated counter states by adding up the effect of all actions up to this point. In order to check if the sequence of states actually respects the transition function of $M$, we need to gain access to the previous state. We achieve this by using the history bit construction from Lemma \ref{lem:undec_prev}. Thus the second layer $l_2\in \tiSSM \cap \diagSSM $ computes $\bs{h}^2_0 = (\bs{0}_d, \bs{e}_Q(q_0), 0, 0, 0)$ and:
    $$
        \bs{h}^2_t = \frac{1}{4}E_{[|Q|+1, 2|Q|]} \cdot \bs{h}^2_{t-1} + I_d \cdot \bs{x_t}
    $$
    By using a FNN $N_{prev}$ implementing the piece-wise linear decision rule from \ref{lem:undec_prev}, we can read out the last state gaining:
    $$
        N_{prev}(\bs{h}^2_t) = (\bs{e}_Q(q_t), \bs{e}_Q(q_{t-1}), \bs{e}_{\Act}(a_t) ,c_i^1, c_i^2, 0)
    $$
    Now at each position we have to check that $(q_{t-1}, a_t, q_t)\in\delta$. We can do this by constructing an FNN $N_{\text{trans}}$ computing a function $\reals^{2|Q|+6} \rightarrow \reals$ with $N_{\text{trans}}(\bs{e}_Q(q), \bs{e}_Q(q'), \bs{e}_{\Act}(a))=0$ if $(q, a, q')\in \delta$ and 1 otherwise for all $q, q' \in Q$ and $a\in\Act$. $N_{\text{trans}}$ can be constructed, due to the \textit{finite-sample expressivity} \cite{10.1145/3446776} which states that each finite set of input-output pairs can be exactly represented by an FNN. 
    The last thing we need to check is if the $\mdec_i$ and $\mztest_i$ actions were valid with respect to the counter states. We define a FNN $N_{\text{valid}}$ which checks for every position $t$ (1) if $a=\mdec_i$ then $c_t^i \geq 0$ and (2) if $a=\mztest_i$ then $c_t^i=0$. For this we define a gadget FNN $N_{\rightarrow}(x,y)$ such that for all $x,y\in\{0,1\}$ holds $N_{\rightarrow}(x,y)=0$ if $x\rightarrow y$ and 1 otherwise. $N_{\rightarrow}(x,y)$ is obviously computed by $1-\min(1, 1-x+y)$ (note that the behavior of $N_{\rightarrow}(x,y)$ on other inputs is not relevant for us). Hence by using the gadgets defined in Lemma \ref{lem:fnn_gadgets} we can define $$N_{\text{valid}}(a_t, c^1_t, c^2_t) = N_{\rightarrow}(N_{=\mdec_i}(a_t), N_{\geq 0}(c_t^i)) + \cdots $$ for all of the above cases.
    To gain the final output FNN $\phi^2$ for layer $l_2$, we do a composition of $N_{prev}$, $N_{\text{trans}}$ and $N_{\text{valid}}$ in such a way that $N_{prev}$ only manipulates dimensions $|Q|+1,\cdots, 2|Q|$, $N_{\text{trans}}$ takes dimensions $1, \cdots, 2|Q|+6$ as input and writes the output into the last dimension $d$ and $N_{\text{valid}}$ takes dimensions $2|Q|, \cdots, 2|Q|+6$ as input and adds the output also to the last dimension $d$. Therefore the final output after layer $l_2$ is $\bs{z}^2_t$ with:
    $$
        (\bs{e}_Q(q_t), \bs{e}_Q(q_{t-1}), \bs{e}_{\Act}(a_t) ,c_i^1, c_i^2, \delta(q_{t-1}, a_t, q_t)+\text{val}(a_t, c_i^1, c_i^2))
    $$
    Note that due to our construction the last dimension is greater than zero if something was not valid. Therefore, the last layer $l_3$ simply sums up the last dimension and the output FNN simply has to check, whether the sum of all last dimension at the last position is zero.

    This yields a reduction from \MHALT to $\ssmSAT[\tiSSM \cap \diagSSM]$ rendering the problem undecidable.

\end{proof}

\end{document}